\theoremstyle{plain}
\newtheorem{theorem}{Theorem}[section]
\theoremstyle{plain}
\newtheorem{lemma}{Lemma}[section]
\theoremstyle{plain}
\newtheorem{assumption}{Assumption}[section]
\theoremstyle{plain}
\theoremstyle{plain}
\newtheorem{corollary}{Corollary}[section]
\theoremstyle{remark}
\newtheorem{remark}{Remark}[section]
\theoremstyle{definition}
\newtheorem{definition}{Definition}[section]
\newcommand{\argmin}{\operatornamewithlimits{argmin\,}}
\newcommand{\E}{\mathbb{E}}
\newcommand{\has}{\mathcal{H}}
\newcommand{\was}{\mathcal{W}}
\newcommand{\pdist}{\mathcal{D}}
\mathchardef\mhyphen="2D
\def \d {\mathrm{d}}
\providecommand{\algorithmname}{Algorithm}
\title{Robust and Efficient Approximate Bayesian Computation: A Minimum Distance Approach}
\author{David T. Frazier\thanks{Australian Research Council and Department of Econometrics and Business Statistics, Monash University, Australia.
Address correspondence to david.frazier@monash.edu.}}
\date{}
\begin{document}

\maketitle

\begin{abstract}
\noindent  In many instances, the application of approximate Bayesian methods is hampered by two practical features: 1) the requirement to project the data down to low-dimensional summary, including the choice of this projection, and which ultimately yields inefficient inference; 2) a possible lack of robustness of these methods to deviations from the underlying model structure. Motivated by these efficiency and robustness concerns, we construct a new Bayesian method that can deliver efficient estimators when the underlying model is well-specified, and which is simultaneously robust to certain forms of model misspecification. This new approach bypasses the calculation of summaries by considering a norm between empirical and simulated probability measures. For specific choices of the norm, we demonstrate that this approach can deliver point estimators that are as efficient as those obtained using exact Bayesian inference, while also simultaneously displaying robustness to deviations from the underlying model assumptions.
\vskip0.3cm
\noindent {\bf Keywords:} {likelihood-free inference, approximate Bayesian computation,
Hellinger distance, robustness, generative models, Bernstein-von Mises}
\end{abstract}

% \tableofcontents

\section{Introduction \label{sec:introduction}}

The use of complex models has led researchers to construct statistical methods that are capable of conducting reliable inference in situations where the likelihood function of such models is intractable. In Bayesian inference, the class of approximate Bayesian computation (ABC) methods,  which replaces evaluation of the likelihood with model simulation, are often employed; for a review on ABC methods see the recent handbook \cite{sisson2018handbook}.

ABC is predicated on the belief that the observed data
is drawn from a given class of models from which it is feasible to generate synthetic data. This fact is then used to construct an approximate posterior for the model unknowns by comparing, in some given distance measure, quantities calculated using observed data and data simulated from the model. While such a procedure allows us to conduct inference in very complex models, the broad applicability of ABC methods has been hindered by at least two factors. First, the choice of summary measures, and as a by product the resulting loss of information due to such a choice. Two, the reliability and robustness of such procedures to deviations from the modeling assumptions.

In many cases, the quantities used to compare observed and synthetic data correspond to moments or sample quantiles. Currently, there is no singularly accepted choice for such ``summary statistics'', as different problems require different approaches, and researchers are therefore left to fish for useful summaries. Regardless of the choice of summaries, with few exceptions, the use of such summary measures will inevitably result in a loss of information relative to a procedure that uses the entire sample, such as (infeasible) exact Bayes approaches. More formally, in a frequentest sense, point estimators obtained from ABC procedures will generally be inefficient (see, e.g., \citealp{LF2016} and \citealp{FMRR2016}).

In addition, it is known since \cite{frazier2020model} that different versions of ABC can deliver very different point estimators under even mild levels of model misspecification. Moreover, the results of these authors suggest that the more complicated the ABC approach, such as, e.g., nonlinear regression adjustments ABC (\citealp{blumF2010non}), the poorer the performance. In this way, the authors argue Occam's razor should be applied to ABC inference when the idea of model misspecification is entertained.  

Motivated by these practical concerns of efficiency and robustness of ABC inference, in Section \ref{sec:mde} we propose a novel ABC approach that can deliver estimators that are as efficient as the infeasible exact Bayes approach, when the model generating synthetic data is correctly specified, but simultaneously displays robustness  to deviations from the modeling assumptions. To avoid the use of summarization via summary statistics, we follow the idea of \cite{Bernton2017} and view the observed and synthetic data through their empirical probability measures, and propose the use of a norm based on these measures to measure the discrepancy between the observed and synthetic datasets. It is well known that such norms give rise to so-called minimum distance functionals, \cite{donoho1988automatic}, and so we term this new approach to ABC, minimum distance ABC (MD-ABC). The MD-ABC posterior is shown to be asymptotically normal under regularity conditions that are known to be satisfied for specific choices of the norm. Moreover, if one takes as the norm the Hellinger distance, the resulting ABC approach delivers point estimators that are asymptotically efficient. The later result extends to the Bayesian framework the results on minimum Hellinger distance estimation (MHDE) pioneered by \cite{beran1977minimum}.

Furthermore, in Section \ref{sec:robust} we demonstrate that, in the class of ABC point estimators, MD-ABC is optimally robust to certain classes of deviations from the assumed model. Using a local model misspecification framework (see, e.g., \citealp{rieder2012robust}), we demonstrate that MD-ABC can display the smallest possible bias and mean squared error in the class of ABC point estimators. This result therefore extends the known robustness of frequentest based MD estimators, see, e.g., \cite{donoho1988automatic}, in two novel directions: 1) to situations where inference is carried out in the Bayesian framework; 2) to situations where MD estimation must be implemented via simulation (i.e., where the model is intractable). Simulated examples considered in Section \ref{sec:robust_ex} demonstrate that in locally misspecified models, MD-ABC can produce superior inference to exact Bayes inference and competing ABC approaches, such as ABC based on the Wasserstein distance.

\section{Implicit models, and ABC Inference\label{sec:generativemodels}}
\subsection{Implicit Models}
Let $(\Omega,\mathcal{F},\mathbb{P})$ denote the intrinsic probability space, with associated expectation operator $\mathbb{E}$, on which all the random variables are defined. Denote by $\mathcal{P}(\mathcal{X})$ the set of probability measures on a space $\mathcal{X}$. We observe $y_{1:n} = (y_1,\ldots,y_n)'$ data points, taking values in $\mathcal{Y}^n$, and which is distributed according to some unknown measure $\mu_\star^{(n)}$. 

In most cases, the space $\mathcal{P}(\mathcal{X})$ is specified to be a parametric class of models $$\mathcal{M}^{(n)}=\{\mu^{(n)}_\theta : \theta \in \Theta\} \subset
\mathcal{P}(\mathcal{Y}^n),\text{ where }\Theta\subset \mathbb{R}^{d_\theta},$$ and where  $\Pi\in\mathcal{P}(\Theta)$ quantifies our prior beliefs about the unknown parameters $\theta$. Regardless of the model complexity, it is often feasible to generate
synthetic observations $z_{1:n}$ according to $\mu^{(n)}_\theta$, for any $\theta\in\Theta$, but there are many cases where it is not feasible to evaluate the likelihood associated to $\mu^{(n)}_\theta$. Even in cases where the likelihood is infeasible to calculate, useful information about the model can still be obtained by comparing the observed data, $y_{1:n}$, and the simulated data, $z_{1:n}$. 

For $A\in\mathcal{F}$, the empirical measure of $A$ is $\hat{\mu}_n(A) = n^{-1}\sum_{i=1}^n \delta_{y_i}(A)$, where $\delta_y$ denotes the Dirac measure on $y\in\mathcal{Y}$. Likewise, for any $\theta\in\Theta$, let $\hat{\mu}_{\theta,n}(A)=n^{-1}\sum_{i=1}^n \delta_{z_i}(A)$ with $z_{1:n}\sim \mu_\theta^{(n)}$, denote the simulated empirical measure. Under weak conditions on the observed data and model class $\mathcal{M}^{(n)}$, the sequence $\{\hat{\mu}_{\theta,n}\}_{n\geq 1}$ of random probability measures, on $\mathcal{Y}^n$, will converge (weakly) toward $\mu^{(n)}_\theta\in\mathcal{P}(\mathcal{Y}^n)$, and, likewise, the sequence $\{\hat{\mu}_n\}_{n\geq1}$ will converge toward $\mu^{(n)}_\star\in  \mathcal{P}(\mathcal{Y}^n)$. Even if the likelihood is intractable, $\hat\mu_n$ and $\hat\mu_{\theta,n}$ can be easily constructed.

\subsection{Standard Approximate Bayesian Computation \label{sec:ABC}}

The following algorithm describes the general approach to approximate Bayesian computation. Let $\epsilon>0$ 
denote some pre-defined threshold, and let $\pdist$ denotes
an arbitrary distance measure between two data sets $y_{1:n}$ and $z_{1:n}$. The most 
\begin{enumerate}
	\item Draw a parameter $\theta$ from the prior distribution $\Pi$, and generate a synthetic dataset $z_{1:n} \sim \mu_{\theta}^{(n)}$.
	\item If $\pdist(y_{1:n}, z_{1:n})\leq\epsilon$, keep $\theta$, otherwise reject it.
\end{enumerate}
Accepted draws from the above can be shown to from the ABC posterior distribution: for $A\subset\Theta$, 
\begin{equation}\label{eq:abcposterior}
\Pi_\epsilon(\theta\in A|{y_{1:n}})= \int_{A}\frac{\d\Pi(\theta)\int_{\mathcal{Y}^n}\mathds{1}\left[\pdist(y_{1:n},z_{1:n})\leq\epsilon\right]\d\mu_\theta^{(n)}( z_{1:n})}{\int_{{\Theta}}\d\Pi(\theta)\int_{\mathcal{Y}^n}\mathds{1}\left[\pdist(y_{1:n},z_{1:n})\leq\epsilon\right]\d\mu_\theta^{(n)}(z_{1:n})},
\end{equation}
where $\mathds{1}$ is the indicator function.

In most cases, using a distance calculated on the data directly is infeasible. The most common approach is to first degraded the data down to a low-dimensional vector of summary statistics and replace $\mathcal{D}(y_{1:n},z_{1:n})$ with a distance based on the summaries. For $\eta:\mathcal{Y}^n\rightarrow\mathcal{B}\subset\mathbb{R}^{d_\eta}$, with $d_\eta$ much smaller than $n$, ABC is most commonly implemented using $\pdist(y_{1:n},z_{1:n}) = \|\eta(y_{1:n})-\eta(z_{1:n})\|$, and where $\|\cdot\|$ denotes the Euclidean norm.  

While practically useful, replacing the data with summaries creates several issues. Firstly, by the nature of the models to which ABC is generally applied, no set of sufficient statistics exist and  the use of summaries inevitably leads to a loss of information. Consequently, as $\epsilon\rightarrow0$ the ABC posterior converges to the partial posterior, i.e., the posterior of $\theta$ given $\eta(y_{1:n})$,
and, as shown in \cite{LF2016} and \cite{FMRR2016}, point estimators for $\theta$ obtained using ABC based on summary statistics will always have a larger variance than those obtained by exact Bayes. Thirdly, it is well-known that in certain situations the use of different summaries in ABC can lead to significantly different results. 

To circumvent the loss of information incurred by using summary statistics in ABC, \cite{Bernton2017} propose Wasserstein-based ABC (WABC): replace $\mathcal{D}(y_{1:n},z_{1:n})$ with the $p$-Wasserstein distance
\begin{equation} \label{eq:wass_def_assignment}
\was_p(\hat{\mu}_n, \hat\mu_{\theta,n})^p = \inf_{\sigma \in \mathcal{S}_n} \frac{1}{n} \sum_{i=1}^n \gamma(y_i, z_{\sigma(i)})^p, 
\end{equation}
where $\mathcal{S}_n$ is the set of permutations of $\{1, \dots,n\}$. In the case where the data is scalar valued, and if $\gamma(y_i,z_j) = |y_i - z_j|$, the infimum is achieved by sorting 
$y_{1:n}$ and $z_{1:n}$ in increasing order and matching the order statistics. The use of order statistics within ABC is well-founded,  see
e.g. \citet{FP2012}, and in the multivariate case Wassertein ABC can be interpreted as generalizing the use of order statistics within ABC to arbitrary dimension in a principled manner.

While undoubtedly novel, practical application of WABC can be hindered by the specialized tools required to manipulate the Wasserstein distance in cases other than $p=1$ and in the multivariate case. In addition, the nature of the Wasserstein distance makes analyzing the theoretical behavior of WABC difficult. In particular, the asymptotic shape of the WABC posterior is currently unknown, as is the behavior, and in particular the asymptotic efficiency of, point estimators obtained by from WABC. Lastly, it is unclear precisely how the deviations from the model affect the resulting inference in WABC. 

Regardless of the above, the ability of WABC to use the entire data has been shown to lead to strong efficiency gains over summary based ABC. Herein, we propose an alternative to WABC that obviates the need to calculate summaries of the data. However, unlike WABC, or ABC based on summary statistics, this new approach can deliver efficient inference in the case of correct model specification, and is simultaneously robust to deviations from the underlying modeling assumptions.

\section{Minimum Distance Based ABC Inference}\label{sec:mde}
Replacing the dataset with a low-dimensional summary statistic inevitable leads to an information loss, and a significant reduction in the efficiency of the resulting inference. Moreover, as discussed in \cite{frazier2020model}, different choices of summary statistics can lead to substantially different behavior when the assumed model is not correct, i.e., inference based on summary statistics are not ``stable'' under model misspecification. With these two features in mind, efficiency and stability of the resulting inferences, we propose a new approach to Bayesian inference that does not resort to a low-dimensional summary statistics but is instead based on distances between observed and simulated empirical measures. Let $\|\cdot\|_\mathcal{P}:\mathcal{P}(\mathcal{Y})\times \mathcal{P}(\mathcal{Y})\rightarrow\mathbb{R}_{+}$ denote a pseudo-norm on $\mathcal{P}(\mathcal{Y})$. Replacing  $\pdist$ by $\|\hat{\mu}_{\theta,n}-\hat{\mu}_{n}\|_\mathcal{P}$ in equation \eqref{eq:abcposterior} yields the ABC posterior
\begin{equation} \label{eq:Habcposterior}
  \Pi_\epsilon(A|{y_{1:n}})= \frac{\int_{A}\d\Pi(\theta)\int_{\mathcal{Y}^n}\mathds{1}\left[\|\hat{\mu}_{\theta,n}-\hat{\mu}_{n}\|_\mathcal{P}\leq\epsilon\right]\d\mu_\theta^{(n)}( z_{1:n})}{\int_{{\Theta}}\d\Pi(\theta)\int_{\mathcal{Y}^n}\mathds{1}\left[\|\hat{\mu}_{\theta,n}-\hat{\mu}_{n}\|_\mathcal{P}\leq\epsilon\right]\d\mu_\theta^{(n)}(z_{1:n})}.
\end{equation}

 It is important to point out that	MD-ABC is not limited to synthetic datasets that are of the same dimension as the observed data. Indeed, there is no reason we could not generate synthetic data $z_{1:m}$, with $m\geq n$, and with corresponding empirical measure $\hat\mu_{\theta,m}$.  Such a mechanism is potentially useful to alleviate simulation noise associated with $\hat\mu_{\theta,n}$. When such a distinction between the observed and simulate size is necessary, we will write the simulated empirical measure as $\hat\mu_{\theta,m}$. 	
 
While there are many possible norms to choose from, and thus many different possible posteriors on which we can base our inferences, for reasons or efficiency and robustness we restrict our attention to classes of norms that yield ``minimum distance functionals''. Optimization of such norms is known to yield point estimators that are inherently robust and potentially efficient (see, e.g., \citealp{donoho1988automatic}). Arguably, the two most commonly used ``robust'' distances for parameter inference are the Hellinger and Cramer-von Mises (hereafter, CvM) distances.
\begin{definition}\label{def:hell}
		Let $\mu,\nu\in\mathcal{P}(\mathcal{Y})$, and let $\d\mu$ and $\d\nu$ denote the corresponding probability densities with respect to some dominating measure, say $\lambda$. The Hellinger distance between $\mu$ and $\nu$ is defined as
		$$	
		\left\{\int\left(\d\mu^{1 / 2}-\d\nu^{1 / 2}\right)^{2} \right\}^{1 / 2}=\left\{2-2 \int \d\mu^{1 / 2} \d\nu^{1 / 2} \right\}^{1 / 2}.
		$$	The Hellinger distance is equivalent to the $\mathcal{L}_2(\mathcal{P})$ norm between $\d\mu^{1 / 2}$ and $\d\nu^{1 / 2}$: $\has(\mu, \nu)=\|\d\mu^{1 / 2}-\d\nu^{1 / 2}\|_{2}$. 
	\end{definition}
	\begin{definition}\label{def:cvm}
		Let $\mu,\nu\in\mathcal{P}(\mathcal{Y})$. For integers $p,q$ and $q\leq p$, define the $\mathcal{L}_{p,q}(\mu)$ norm between $\mu$ and $\nu$, with respect to the measure $\mu$, as
		$$	
		\|\mu-\nu\|_{q,p}=\left\{\int\left(\mu-\nu\right)^{p}\d \mu \right\}^{1/q}. 
		$$For $\mu(y):=\text{Pr}(Y\leq y)$, when $\mu^{-1}$ exists, and taking $p=2$ and $q=1$, yields $$\|\mu-\nu\|_{1,2}=\left\{\int\left(\mu-\nu\right)^{2}\d \mu \right\}=\int_0^1(\nu\left[\mu^{-1}(x)\right]-x)^2\d x=\|\nu\left[\mu^{-1}(x)\right]-\cdot\|_2^2$$which is the Cramer-von Mises (CvM) distance between $\mu$ and $\nu$ (with respect to $\mu$).
	\end{definition}

\begin{remark}In Appendix \ref{app:examples}, we compare the behavior  of MD-ABC and WBC in two examples that have featured as test cases in the ABC literature; namely, the $g$-and-$k$ model and the $M/G/1$ queuing model. These results suggest that inferences based on MD-ABC compare favorably with those based on WABC. 
\end{remark}
\begin{remark}
	For continuously distributed data, the Hellinger distance requires a ``density function'' for the observed and simulated datasets. In this case, it is then beneficial to consider simulating datasets $z_{1:m}$, with $m>n$, so that the  Hellinger distance can be approximated more accurately. The observed (respectively, simulated) density function can be constructed as the convolution of $\hat{\mu}_n$ (respectively, $\hat{\mu}_{\theta,m}$) with a smooth kernel function, $\mathcal{K}_n$, that depends on bandwidth function $a_n\rightarrow0$ as $n\rightarrow\infty$, which yields the smoothed empirical measure ${\tilde\mu}_{n}(y):=(\hat{\mu}_n\star \mathcal{K}_n)(y)$ (respectively, $\tilde\mu_{\theta,m}(z):=(\hat{\mu}_{\theta,m}\star \mathcal{K}_n)(z)$), and where $f\star g$ denotes the convolution of $f$ and $g$. For inference based on the Hellinger distance, \cite{basu1994minimum} recommend that the same kernel, $\mathcal{K}_n$, is used to estimate the density of the observed data, and to construct the model-based density. When $\mu_\theta$ is tractable, no simulation is required and \cite{basu1994minimum} propose to calculate $\d\mu_\theta\star\mathcal{K}_n$ directly. 
	
\end{remark}

\begin{remark}
Our motivation for using the norm $\|\cdot\|_\mathcal{P}$ in ABC is two-fold. Firstly, in parametric models, and when the model is correctly specified, minimizing the Hellinger or CvM distances over $\mathcal{M}^{(n)}$ delivers point estimators that are as efficient as maximum likelihood estimation, or nearly so in the case of the CvM distance. Given this fact, it is highly likely that ABC based on these norms will deliver point estimators that are close to being as efficient as exact Bayes based on the intractable likelihood; we present an example Section \ref{sec:mix1} that demonstrates this property and theoretically verify this statement in Section \ref{sec:asym:postmean}. That being said, there is no reason to suspect that MD-ABC will deliver posteriors that are equivalent to those obtained by exact Bayes, given that the metrics in which the two approaches are measuring ``distance'' are different. Our second motivating factor is the well-known fact that point estimators based on the criterion $\|\hat{\mu}_n-{\mu}_{\theta}\|_\mathcal{P}$ display robustness to certain types of model misspecification. Given the results of \cite{frazier2020model} regarding the behavior or summary statistic based ABC under model misspecification, the potential robustness of MD-ABC to model misspecification could be critical in ensuring ABC delivers reliable inferences in practical applications.
\end{remark}

\subsection{Example: Mixture Model}\label{sec:mix1}
We now consider a simplistic example that allows us to easily compare MD-ABC, WABC and exact Bayes (hereafter, EB) inference. We observe a sequence $y_{1:n}=(y_1,\dots,y_n)$ generated independent and identically distributed (iid) from the following two-component mixture model: for $\omega\in[0,1]$, and $\varphi(\cdot;\mu,\sigma^2)$ denoting a normal density with mean $\mu$ and variance $\sigma^2$,
$$
f_\theta(\cdot):=(1-\omega)\varphi(\cdot;\mu,\sigma^2_1)+\omega\varphi(\cdot;-\mu,\sigma^2_2),\;\;\theta:=(\mu,\omega,\sigma_1,\sigma_2)'.
$$
From this model, we generate $n=100$ observations where the unknown parameters $\theta$ are fixed at $\theta=(-2,0.5,1,1)'$. Our prior beliefs over $\theta$ are 
$$
\mu\sim\mathcal{N}(0,1),\;\omega\sim\mathcal{U}[0,1],\;\sigma_1\sim\mathcal{U}[0,10],\;\sigma_2\sim\mathcal{U}[0,10].
$$

In this experiment, it is simple to construct the likelihood and sample from the exact posterior. To this end, EB inference is implemented using the NUTS sampler via \texttt{Rstan} (\citealp{carpenter2017stan}). {The NUTS sample was run for 10,000 iterations with an initial warmup period of 1,000 iterations.} 

We compare EB with ABC based on the Wasserstein distance (WABC), and MD-ABC using the CvM distance (CvM-ABC) and the Hellinger distance (H-ABC). We set the number of simulated sample points for CvM-ABC and WABC to be $n$, i.e., we take $m=n$, while for H-ABC we use $m=2\cdot n$. All ABC posteriors are sampled using the same approach described in the previous examples.

The resulting posteriors for EB, WABC, and MD-ABC are given in Figure \ref{fig:mixfig1}. All posteriors are centered over, or close to, the true values used to generated the data, and there is remarkable agreement across the posteriors. Figure \ref{fig:mixfig_reps1} compares the posterior means of the four different approaches across fifty replicated data sets (generated using the same specification), and demonstrates that, in repeated samples, there is substantial agreement between the point estimators from these different methods. 
\begin{figure}[H]
	\centering
	\includegraphics[height=.40\textheight,width=1.1\textwidth]{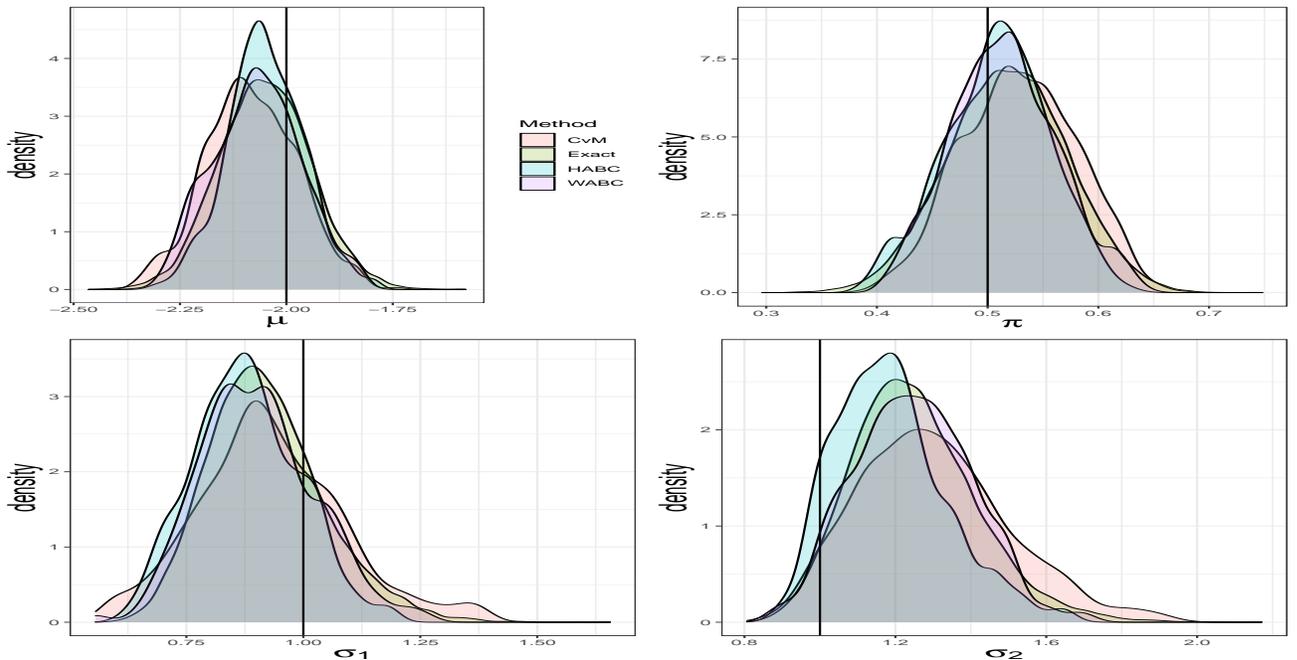}
	\caption{\small Posterior marginals in the mixture model example. The value used to generate the data is $\theta_\star=(-2,0.5,1,1)'$. Vertical lines indicate true values.}
	\label{fig:mixfig1}
\end{figure}
\begin{figure}[H]
	\centering
	\includegraphics[height=.50\textheight,width=.99\textwidth]{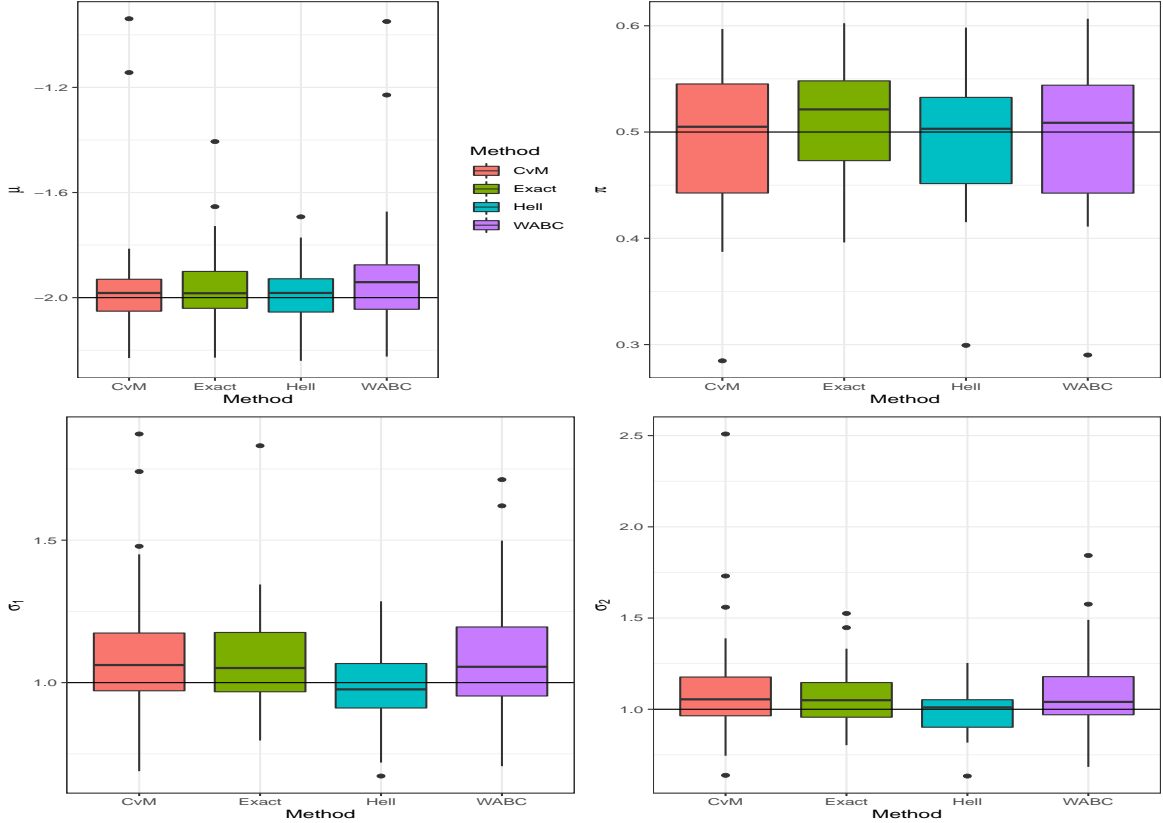}
	\caption{\small Posterior means in the mixture model example across repeated samples. The value used to generate the data is$\theta_\star=(-2,0.5,1,1)'$. Horizontal lines indicate true values.}
	\label{fig:mixfig_reps1}
\end{figure}

Lastly, Table \ref{tab:mix1} compares the posterior mean, (Means), standard deviation (Std), Monte Carlo coverage (Cov), and root mean squared error (RMSE) for the posterior mean across the replications.\footnote{Monte Carlo coverage is calculated as the percentage of the replications where the true value $\theta_\star$ is contained in the marginal 95\% credible set.} The results illustrate that all methods generally give reliable point estimates of the unknown parameters, with H-ABC displaying the smallest bias and RMSE across all parameters. In terms of posterior uncertainty, WABC displays the largest posterior uncertainty, as measured by the posterior standard deviation, while H-ABC displays the smallest amount of posterior uncertainty, followed closely by EB. In terms of Monte Carlo coverage, we note that H-ABC is arguably closest to the nominal 95\% level across the different parameters, while WABC yields 100\% coverage for all parameters except $\pi$.

The repeated sampling results demonstrate that MD-ABC can produce inferences that are competitive with those obtained using EB, and also yields point estimators that outperform WABC according to the chosen loss measures.  %Moreover, these results highlight an important point: since MD-ABC and exact Bayes are based on different distance metrics, there is no reason to expect the posteriors, or their point estimators, to generally agree.  

% Table generated by Excel2LaTeX from sheet 'noCont_repResults'
\begin{table}[H]
  \centering
  \caption{Repeated sampling result for marginal posteriors. CvM refers to MD-ABC based on the CvM distance, while Hell refers to MD-ABC based on the Hellinger distance; WABC is ABC based on the Wasserstein distance, and Exact refers to exact Bayes. Means- overall posterior mean (across the replications). Std- average posterior standard deviation. Cov- Monte Carlo coverage. RMSE- root mean squared error for the posterior mean (relative to $\theta_\star=(-2,0.5,1,1)'$).}
    \begin{tabular}{rrrrrrrrrrr}
          &       & \multicolumn{1}{l}{Means} &       &       &       &       &       & \multicolumn{1}{l}{Std} &       &  \\
                & \multicolumn{1}{l}{$\mu$} & \multicolumn{1}{l}{$\pi$} & \multicolumn{1}{l}{$\sigma_1$} & \multicolumn{1}{l}{$\sigma_2$} &       &       & \multicolumn{1}{l}{$\mu$} & \multicolumn{1}{l}{$\pi$} & \multicolumn{1}{l}{$\sigma_1$} & \multicolumn{1}{l}{$\sigma_2$} \\
    {CvM} & -1.96 & 0.49 & 1.10   & 1.11  &       & {CvM} & 0.13 & 0.05 & 0.21 & 0.23 \\
    {Exact} & -1.96 & 0.51 & 1.08  & 1.07  &       & {Exact} & 0.12 & 0.06 & 0.16 & 0.15 \\
    {Hell} & -1.98 & 0.49 & 1.01 & 1.01  &       & {Hell} & 0.09& 0.05  & 0.12 & 0.12 \\
    {WABC} & -1.93 & 0.49 & 1.09 & 1.09 &       & {WABC} & 0.19 & 0.06 & 0.27 & 0.28 \\
          &       & \multicolumn{1}{l}{COV} &       &       &       &       &       & \multicolumn{1}{l}{RMSE} &       &  \\
          & \multicolumn{1}{l}{$\mu$} & \multicolumn{1}{l}{$\pi$} & \multicolumn{1}{l}{$\sigma_1$} & \multicolumn{1}{l}{$\sigma_2$} &       &       & \multicolumn{1}{l}{$\mu$} & \multicolumn{1}{l}{$\pi$} & \multicolumn{1}{l}{$\sigma_1$} & \multicolumn{1}{l}{$\sigma_2$} \\
    {CvM} & 88\%   & 92\%  & 92\%  & 96\%  &       & {CvM} & 0.21 & 0.06 & 0.24 & 0.29 \\
    {Exact} & 90\%   & 98\%  & 90\%  & 92\%  &       & {Exact} & 0.15 & 0.06 & 0.20 & 0.17 \\
    {Hell} & 92\%  & 92\%  & 90\%  & 96\%   &       & {Hell} & 0.10 & 0.06 & 0.13 & 0.12 \\
    {WABC} & 100\%   & 96\%  & 100\%  & 100\% &       & {WABC} & 0.22 & 0.06 & 0.23  & 0.22 \\
          &       &       &       &       &       &       &       &       &       &  \\
    \end{tabular}%
  \label{tab:mix1}%
\end{table}%

\section{Theoretical properties \label{sec:asymptoticproperties}}
In this section, we analyze the asymptotic behavior of MD-ABC: we prove posterior concentration, a Bernstein von-Mises result, and document the behavior of the MD-ABC posterior mean. In addition, we rigorously analyze the robustness of MD-ABC using a local perturbation scheme. The results demonstrate that MD-ABC can deliver point estimators that are as efficient as those obtained by exact Bayes,  for certain choices of $\|\cdot\|_{\mathcal{P}}$, and is simultaneously robust to deviations from the underlying modeling assumptions.  

Before rigorously analyzing the behavior of this new ABC approach, we first define the following additional notations that will be used throughout the remainder of the paper. 
For sequences $\{a_{n}\}$ and $\{b_{n}\}$, real valued, $a_{n}\lesssim b_{n}$
denotes $a_{n}\leq Cb_{n}$ for some $C>0$, $a_{n}\asymp b_{n}$ denotes equivalent order of
magnitude, $a_{n}{\gg }b_{n}$ indicates a larger order of magnitude and the
symbols $o_{p}(a_{n}),\;O_{p}(a_{n})$ have their usual meaning. We reserve $\Rightarrow $ to denote weak convergence under some stated measure. Whenever $\mu_\star$ and $\mu_\theta$ is used, it is assumed that these objects exist. Unless otherwise noted, all limits are taken as $n\rightarrow\infty$.

\subsection{Posterior Concentration\label{sec:asymptotics:double}}
For a sequence of tolerance values $\{\epsilon_n\}_{n\ge1}$, we first prove that the MD-ABC posterior concentrates onto
$
\epsilon_\star:=\inf_{\theta\in\Theta}\|\mu_{\theta}-\mu_\star\|_\mathcal{P}\geq0.
$
Since $\epsilon_\star\geq0$, this posterior concentration result will not require that the model be correctly specified. We say the model is correctly specified if there exists some $\theta\in\Theta$, possibly not unique, such that $\mu_{\theta_\star}=\mu_\star$. %Moreover, this result does not require that the parameters are identifiable: parameters are identifiable if for all $\theta \neq\theta^\prime$ we have $\mu_\theta \neq \mu_{\theta^\prime}$ on sets of positive measure.

We maintain the following assumptions, which are similar to those considered in \cite{FMRR2016}:
\begin{assumption}\label{ass:1}
	There exists a positive sequence $v_n\rightarrow\infty$ such that $$\liminf_{n\rightarrow\infty}\mu^{(n)}_\star\left(\|\hat{\mu}_n- \mu_\star\|_\mathcal{P}\geq v_n^{-1} \right)=1.$$
\end{assumption}

\begin{assumption}\label{ass:2}There exist a positive sequence $v_{m}\rightarrow\infty$ as $m\rightarrow\infty$ such that for some $\kappa$ such that $d_\theta<\kappa<\infty$, and for all $u>0$, and for all $\theta \in \Theta$,  
	\begin{equation*}
	\mu^{(n)}_{{\theta }}\left[\|\hat\mu_{\theta,m}-\mu_\theta\|_\mathcal{P} >u\right] \leq c({\theta }%
	)/v_m^{-\kappa}u^{-\kappa},\quad \int_{\Theta }c({\theta })\d \Pi ({%
		\theta })<+\infty.
	\end{equation*}
\end{assumption}

\begin{assumption}\label{ass:4} There exist some $D>0$ and $M_0, \delta_0>0$ such that, for all $\delta_0\geq \delta >0$ and  $M\geq M_0$,  there exists $S_\delta \subset \left\{\theta \in \Theta : \| \mu_\theta-\mu_{\star}\|_\mathcal{P}-\epsilon_\star \leq \delta \right\}$ such that, for $D< \kappa $,$$\int_{S_\delta} ( 1  - { c(\theta)/M }{  })\d \Pi(\theta)  \gtrsim\delta^{D}.$$
\end{assumption}
\begin{remark}
	The above assumptions are similar to Assumptions 1-4 in \cite{FMRR2016} and \cite{frazier2020model} in the case of summary statistic ABC under correct model misspecification, and global model misspecification, respectively. However, in this case, the assumptions do not pertain the summary statistics calculated from the model but to the empirical measures calculated from the observed and simulated data. We refer to those papers for a general discussion of these assumptions. 
\end{remark}	

The proof of the following follows the same lines as Theorem 1 of \cite{FMRR2016}, and Proposition 3.2 in \cite{Bernton2017} and hence is omitted for brevity. 
\begin{theorem}[Theorem 1, \cite{frazier2020model}]\label{thm:one}
	Let the tolerance sequence be such that $\epsilon_n \rightarrow \epsilon^* $ with 
	$$\epsilon_n \geq \epsilon^*  +  M v_m^{-1}  + v_{n}^{-1}, $$ for $M$  large  enough. For any positive sequence $M_n\rightarrow\infty$, let $\delta_n\rightarrow0$ be such that
	$\delta_n \geq  M_n(\epsilon_n - \epsilon^*)$. If Assumptions \ref{ass:1}-\ref{ass:4} are satisfied, then
	\begin{equation}\label{post:conc:good}
	\Pi_{\epsilon}\left[ \|\mu_\theta-\mu_\star\|_\mathcal{P} \geq \epsilon^*+ \delta_{n} |y_{1:n} \right] = o_{p}(1),
	\end{equation}
	as soon as $\delta_n \geq M_n v_m^{-1}u_n^{-D/\kappa}  = o(1)$,	with $u_n =\epsilon_n - (\epsilon^*  +  M v_m^{-1}  + v_{n}^{-1}  )\geq 0$.
\end{theorem}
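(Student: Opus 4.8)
The plan is to control the posterior mass of the ``bad set''
$$A_n:=\bigl\{\theta\in\Theta:\|\mu_\theta-\mu_\star\|_\mathcal{P}\geq\epsilon^*+\delta_n\bigr\}$$
by writing $\Pi_\epsilon[A_n\mid y_{1:n}]=N_n/Z_n$, where, for the observed $\hat\mu_n$,
$$N_n:=\int_{A_n}\mu_\theta^{(n)}\!\left[\|\hat\mu_{\theta,m}-\hat\mu_n\|_\mathcal{P}\leq\epsilon_n\right]\d\Pi(\theta),\qquad Z_n:=\int_{\Theta}\mu_\theta^{(n)}\!\left[\|\hat\mu_{\theta,m}-\hat\mu_n\|_\mathcal{P}\leq\epsilon_n\right]\d\Pi(\theta)$$
are, respectively, the numerator and the normalising constant of \eqref{eq:Habcposterior}. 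Throughout I would condition on the event $E_n:=\{\|\hat\mu_n-\mu_\star\|_\mathcal{P}\leq v_n^{-1}\}$, which has $\mu_\star^{(n)}$-probability tending to one by Assumption \ref{ass:1}; because the conclusion is an $o_p(1)$ statement, it suffices to exhibit a deterministic $o(1)$ bound for $N_n/Z_n$ valid on $E_n$ (off $E_n$ the ratio is trivially at most one). This mirrors the summary-statistic argument of \cite{FMRR2016}, with the empirical measures playing the role of the summaries.

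For the numerator I would use the reverse triangle inequality: for $\theta\in A_n$, on $E_n$ the acceptance event $\|\hat\mu_{\theta,m}-\hat\mu_n\|_\mathcal{P}\leq\epsilon_n$ can occur only if
$$\|\hat\mu_{\theta,m}-\mu_\theta\|_\mathcal{P}\geq\|\mu_\theta-\mu_\star\|_\mathcal{P}-\|\hat\mu_n-\mu_\star\|_\mathcal{P}-\|\hat\mu_{\theta,m}-\hat\mu_n\|_\mathcal{P}\geq\delta_n-(\epsilon_n-\epsilon^*)-v_n^{-1}=:w_n.$$
Since $\delta_n\geq M_n(\epsilon_n-\epsilon^*)$ with $M_n\rightarrow\infty$ and $\delta_n\gg v_n^{-1}$, we have $w_n=\delta_n(1+o(1))>0$ eventually. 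Applying the tail bound of Assumption \ref{ass:2} at $u=w_n$, integrating over $A_n$, and using $\int_\Theta c(\theta)\,\d\Pi(\theta)<\infty$, gives $N_n\leq(v_mw_n)^{-\kappa}\int_\Theta c(\theta)\,\d\Pi(\theta)\lesssim(v_m\delta_n)^{-\kappa}$.

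For the normalising constant I would restrict the integral to the set $S_\delta$ furnished by Assumption \ref{ass:4}. For $\theta\in S_\delta$ the ordinary triangle inequality gives, on $E_n$,
$$\|\hat\mu_{\theta,m}-\hat\mu_n\|_\mathcal{P}\leq\|\hat\mu_{\theta,m}-\mu_\theta\|_\mathcal{P}+(\epsilon^*+\delta)+v_n^{-1},$$
so the draw is accepted whenever $\|\hat\mu_{\theta,m}-\mu_\theta\|_\mathcal{P}\leq t_n:=\epsilon_n-\epsilon^*-\delta-v_n^{-1}$, and the complementary form of Assumption \ref{ass:2} yields $\mu_\theta^{(n)}[\text{accept}]\geq 1-c(\theta)(v_mt_n)^{-\kappa}$. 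Choosing the free radius $\delta:=\tfrac12\bigl(Mv_m^{-1}+u_n\bigr)$ (which satisfies $\delta\leq\delta_0$ for large $n$) makes $t_n=\delta$ and $(v_mt_n)^{\kappa}=\bigl(\tfrac12(M+v_mu_n)\bigr)^{\kappa}\geq(M/2)^{\kappa}\geq M_0$ once $M$ is large; instantiating Assumption \ref{ass:4} with its parameter set to $(v_mt_n)^{\kappa}$ then gives
$$Z_n\geq\int_{S_\delta}\bigl(1-c(\theta)/(v_mt_n)^{\kappa}\bigr)\,\d\Pi(\theta)\gtrsim\delta^{D}\gtrsim u_n^{D}.$$

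Putting the two bounds together, on $E_n$,
$$\Pi_\epsilon[A_n\mid y_{1:n}]=\frac{N_n}{Z_n}\lesssim\frac{(v_m\delta_n)^{-\kappa}}{u_n^{D}}=v_m^{-\kappa}\,\delta_n^{-\kappa}\,u_n^{-D},$$
and the hypothesis $\delta_n\geq M_nv_m^{-1}u_n^{-D/\kappa}$ forces the right-hand side to be $\lesssim M_n^{-\kappa}=o(1)$; since $\mu_\star^{(n)}(E_n)\rightarrow1$, the claimed $o_p(1)$ bound follows. I expect the genuine difficulty to lie in the calibration in the denominator step: the radius $\delta$ of $S_\delta$ must be taken large enough that the prior mass $\delta^{D}$ does not decay too fast, yet small enough that $t_n$ stays bounded below so that the instantiated Assumption \ref{ass:4} parameter $(v_mt_n)^{\kappa}$ remains above $M_0$. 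Balancing this against the numerator rate $(v_m\delta_n)^{-\kappa}$, and verifying that $w_n\sim\delta_n$ together with the requisite lower bound on $u_n$ hold simultaneously for all large $n$, is the delicate bookkeeping that produces the stated threshold $\delta_n\geq M_nv_m^{-1}u_n^{-D/\kappa}$; it is also where the exponent condition $D<\kappa$ enters, guaranteeing that the $v_m^{-\kappa}$ decay of the numerator dominates the $v_m$-dependence of the denominator.
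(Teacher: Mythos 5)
Your proposal is correct, and it is essentially the canonical argument that the paper itself defers to (it omits the proof, citing Theorem 1 of \cite{FMRR2016}, Proposition 3.2 of \cite{Bernton2017}, and Theorem 1 of \cite{frazier2020model}): split the posterior into numerator and normalising constant, bound the numerator via the reverse triangle inequality on the concentration event for $\hat\mu_n$ together with the Assumption \ref{ass:2} tail bound, and bound the denominator from below by restricting to $S_\delta$ and invoking Assumption \ref{ass:4} with the radius calibrated so that $t_n=\delta\asymp Mv_m^{-1}+u_n$. Your bookkeeping (in particular $w_n=\delta_n(1+o(1))$, the choice $M:=(v_mt_n)^{\kappa}\geq M_0$, and the final $M_n^{-\kappa}$ rate) reproduces the stated threshold $\delta_n\geq M_nv_m^{-1}u_n^{-D/\kappa}$ exactly as in those references.
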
	
\begin{remark}
The above result states that the MD-ABC posterior concentrates onto $\inf\|\mu_\theta-\mu_\star\|$ as $n\rightarrow\infty$ and $\epsilon_n\rightarrow\epsilon_\star$. Moreover, the results states that this concentration happens at the slower of the two rates for the observed and simulated datasets. Such a result is meaningful as nothing prohibits us from simulating more than the data size, which could allow us to obtain better estimates of $\mu_\theta$, than if we were restricted to simulating samples of length $m=n$. The MD-ABC approach with non-zero $\epsilon_\star$ can be viewed as a type of coarsened posterior, \cite{miller2015}. However, while \cite{miller2015} present certain arguments as to why coarsened posteriors can lead to robust inferences in the case of model misspecification, we formally prove the robustness of MD-ABC to deviations from the modeling assumptions in Section \ref{sec:robust}. Lastly, it is important to note that the result in Theorem \ref{thm:one} recovers the case of posterior concentration under correctly specified models when $\epsilon_\star=0$. 
\end{remark}

The result of Theorem \ref{thm:one} does not require the parameters to be identifiable, in which case posterior concentrates is onto the set $\{\theta\in\Theta:\epsilon_\star=\|\mu_\theta-\mu_\star\|_\mathcal{P}\}$. However, in many cases the model $\mathcal{M}^{(n)}$ will be identifiable and it would be convenient to state conditions under which the MD-ABC posterior concentrates onto a Dirac mass. 
\begin{assumption}\label{ass:5}
	(i) The map $\theta\mapsto\mu_\theta$ is injective; (ii) There exists $\theta_\star\in\Theta$ such that, for all $\delta >0$,
	$$\inf_{\theta\in\Theta: \|\theta-\theta_\star\| > \delta} \|\mu_\theta-\mu_\star\|_{\mathcal{P}} > \|\mu_{\theta_\star}-\mu_\star\|_{\mathcal{P}}.$$ (iii) For some $L>0$ and $\alpha>0$, for all $\theta$ in some open neighborhood of $\theta_\star$, $$\|\theta-\theta_\star\|\leq L\left[\|\mu_{\theta}-\mu_\star\|-\epsilon_\star\right]^{\alpha}.$$
\end{assumption}
\begin{corollary}\label{thm:cor1}Assume the hypotheses of Theorem \ref{thm:one} and Assumption \ref{ass:5} are satisfied. Then, as soon as $\delta_n \geq M_n v_m^{-1}u_n^{-D/\kappa}  = o(1)$,	with $u_n =\epsilon_n - (\epsilon^*  +  M v_m^{-1}  + v_{n}^{-1}  )\geq 0$,
	$$\Pi_{\epsilon}[\|\theta-\theta_\star\|>L\cdot\delta_n^{\alpha}|y_{1:n}]=o_{p}(1).$$
\end{corollary}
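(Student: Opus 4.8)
The plan is to convert the concentration of the MD-ABC posterior in the pseudo-norm $\|\cdot\|_\mathcal{P}$, already delivered by Theorem \ref{thm:one}, into concentration of $\theta$ around $\theta_\star$, using the local identification bound of Assumption \ref{ass:5}(iii) to control parameters near $\theta_\star$ and the global separation of Assumption \ref{ass:5}(ii) to rule out parameters far from $\theta_\star$. Once the relevant deterministic set inclusion is in place, all of the probabilistic content is inherited directly from Theorem \ref{thm:one}. A useful preliminary observation is that Assumption \ref{ass:5}(ii) forces $\theta_\star$ to be the unique minimiser of $\theta\mapsto\|\mu_\theta-\mu_\star\|_\mathcal{P}$, and hence $\|\mu_{\theta_\star}-\mu_\star\|_\mathcal{P}=\epsilon_\star=\epsilon^*$: indeed, any competitor $\theta'\neq\theta_\star$ lies in a set $\{\|\theta-\theta_\star\|>\delta\}$ for small enough $\delta$, on which (ii) gives a value strictly exceeding $\|\mu_{\theta_\star}-\mu_\star\|_\mathcal{P}$, so no competitor can lower the distance below $\|\mu_{\theta_\star}-\mu_\star\|_\mathcal{P}$.

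Next I would fix $\delta_0>0$ small enough that the ball $\{\theta:\|\theta-\theta_\star\|\le\delta_0\}$ is contained in the open neighbourhood on which (iii) holds, and split the target event accordingly. On the ``near'' region $\{L\delta_n^\alpha<\|\theta-\theta_\star\|\le\delta_0\}$ I would apply the contrapositive of (iii): since $\|\theta-\theta_\star\|\le L[\|\mu_\theta-\mu_\star\|_\mathcal{P}-\epsilon_\star]^\alpha$ with $\alpha>0$, the bound $\|\theta-\theta_\star\|>L\delta_n^\alpha$ forces $\|\mu_\theta-\mu_\star\|_\mathcal{P}>\epsilon_\star+\delta_n$. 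On the ``far'' region $\{\|\theta-\theta_\star\|>\delta_0\}$ I would invoke (ii) with $\delta=\delta_0$ to obtain a fixed gap
\[
\Delta:=\inf_{\|\theta-\theta_\star\|>\delta_0}\|\mu_\theta-\mu_\star\|_\mathcal{P}-\epsilon_\star>0,
\]
and, since $\delta_n=o(1)$ and hence $L\delta_n^\alpha=o(1)$, for all $n$ large enough we have both $L\delta_n^\alpha<\delta_0$ and $\delta_n<\Delta$, so every $\theta$ in the far region also satisfies $\|\mu_\theta-\mu_\star\|_\mathcal{P}\ge\epsilon_\star+\Delta>\epsilon_\star+\delta_n$. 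Combining the two regions yields, for all $n$ sufficiently large, the inclusion
\[
\{\theta:\|\theta-\theta_\star\|>L\delta_n^\alpha\}\subseteq\{\theta:\|\mu_\theta-\mu_\star\|_\mathcal{P}\ge\epsilon_\star+\delta_n\}.
\]

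Integrating this inclusion against the MD-ABC posterior gives $\Pi_\epsilon[\|\theta-\theta_\star\|>L\delta_n^\alpha\mid y_{1:n}]\le\Pi_\epsilon[\|\mu_\theta-\mu_\star\|_\mathcal{P}\ge\epsilon_\star+\delta_n\mid y_{1:n}]$, and the right-hand side is $o_p(1)$ by Theorem \ref{thm:one} under exactly the stated rate condition $\delta_n\ge M_nv_m^{-1}u_n^{-D/\kappa}=o(1)$. The argument is essentially deterministic, so I do not expect a serious obstacle; the only point requiring genuine care is the patching of the two regions, namely verifying that the fixed separation $\Delta$ supplied by (ii) eventually dominates the shrinking radius $\delta_n$, and that $\delta_0$ is chosen inside the neighbourhood where the H\"older-type bound (iii) is valid. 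A secondary subtlety is confirming that $\theta_\star$ indeed attains $\epsilon_\star$, so that the quantity $[\|\mu_\theta-\mu_\star\|_\mathcal{P}-\epsilon_\star]^\alpha$ appearing in (iii) is well defined and non-negative; this is precisely what the preliminary observation above supplies.
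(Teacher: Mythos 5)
Your proof is correct and is precisely the argument the paper relies on: the paper omits the proof of Corollary \ref{thm:cor1}, and your route---combining the global separation of Assumption \ref{ass:5}(ii) with the local H\"older-type bound of Assumption \ref{ass:5}(iii) to obtain the eventual set inclusion $\{\theta:\|\theta-\theta_\star\|>L\delta_n^{\alpha}\}\subseteq\{\theta:\|\mu_\theta-\mu_\star\|_\mathcal{P}\geq \epsilon_\star+\delta_n\}$, then integrating against the posterior and invoking Theorem \ref{thm:one}---is exactly the standard argument from the cited literature that the omission presupposes. The two points of care you flag (choosing $\delta_0$ inside the neighbourhood where (iii) holds, and letting the fixed gap $\Delta$ from (ii) dominate the shrinking $\delta_n$) are indeed the only nontrivial steps, and you handle both.
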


\subsection{Asymptotic Behavior of the Posterior and the Posterior Mean}\label{sec:asym:postmean}
Under the case of correct model specification, \cite{beran1977minimum} demonstrated that the minimum Hellinger distance estimator, the minimizer of $\has\left(\mu_\theta,\hat{\mu}_{n}\right)$, was asymptotically equivalent to the MLE under correct model specification. In addition, \cite{ozturk1997generalised} have shown that the minimum CvM distance estimator is asymptotically normal and can be nearly as efficient as MLE. In this section, we extend such results to the Bayesian paradigm in the case of intractable likelihoods. In particular, we demonstrate that under correct model specification: 1) the MD-ABC posterior is asymptotically Gaussian ; 2) the MD-ABC posterior mean is asymptotically equivalent to the minimizer of the map $\theta\mapsto \|\mu_{\theta}-\hat{\mu}_{n}\|_{\mathcal{P}}$. The latter result directly implies that,  in the class of regular models, MD-ABC based on the Hellinger distance delivers point estimators that are asymptotically as efficient as the MLE . 

In order to compare the efficiency of MD-ABC with other approaches, we explicitly consider the case of correctly specified models. In addition, we consider that MD-ABC is implemented with $m=n$ simulations. In what follows consider that $\|\cdot\|_\mathcal{P}$ can be associated to the inner product $\langle\cdot,\cdot\rangle$. 

Throughout the remainder, we employ the following explicit assumptions.  
\begin{assumption}\label{ass:5a}
 For some $\theta_\star\in\Theta$, $\|\mu_{\theta_\star}-\mu_\star\|_{\mathcal{P}}=0$ and for any $\theta\neq\theta_\star$, $\|\mu_{\theta}-\mu_\star\|_{\mathcal{P}}>0$. 
\end{assumption}

\begin{assumption}\label{ass:6}Let $t:=\sqrt{n}(\theta-\theta_\star)\in\Theta$ denote a local parameter. The map $\theta\mapsto\mu_\theta$ satisfies:
	\begin{enumerate}
		\item There exists a vector $\dot\xi=(\dot\xi_1,\dots,\dot\xi_{d_\eta})'$ such that $\|\mu_{\theta_\star+t/\sqrt{n}}-\mu_\star-\langle t,\dot\xi \rangle\|_{\mathcal{P}}=o(\|t\|)$. 
		\item The vector $\dot\xi$ is such that $H_\star:=\E[\dot\xi\dot\xi']$ is non-singular. 
		\item For any $t\neq0$, there exists constants $C_1,C_2>0$, $C_2\geq C_1$, such that $C_1\|t\|\leq\|\langle t,\dot\xi\rangle\|\leq C_2\|t\|$
	\end{enumerate}
\end{assumption}

\begin{assumption}\label{ass:seven}(i) There exists some $\delta>0$ such that for all $\theta\in\Theta$ with $\|\theta-\theta_\star\|\leq\delta$, for $\mathbb{G}_{\theta}$ a mean-zero Gaussian process with covariance kernel $\mathbb{E}\mathbb{G}_{\theta}(z_1)\mathbb{G}_{\theta}(z_2):=\mu_\theta(z_1\wedge z_2)-\mu_\theta(z_1)\mu_\theta(z_2)$, 
$$
\sqrt{n}\left[\hat{\mu}_{n,\theta}-\mu_\theta\right]\Rightarrow\mathbb{G}_{\theta},\text{ and }\sqrt{n}\left[\hat{\mu}_{n}-\mu_{\theta_\star}\right]\Rightarrow\mathbb{G}_{\theta_\star}.
$$ 	
\end{assumption}

\begin{assumption}\label{ass:eight} For some positive $\delta $, all $\Vert {\theta }-%
	{\theta }_{\star}\Vert \leq \delta $, and for all sets $B_{n}=\big\{%
	t:\|t\|\le h_n%
	\big\}$ and all $u\in \mathbb{R}$ fixed, for some $%
	h_{n}\rightarrow 0$ as $n\rightarrow +\infty $, for some $V_\theta>0$, % and some $\sigma_{\theta,m}$ with $\sigma_{\theta,m}/\sqrt{m}\rightarrow1/\sqrt{\mu_\theta(1-\mu_\theta)}$
	\begin{equation}
	\begin{split}
	\lim_{n\rightarrow\infty}
	\frac{1}{h_{n}}\,{\mu^{(n)}_{{\theta }}\left[ \{\hat{\mu }_{\theta}-\mu_\theta\}-u\in B_{n}\right] }
	& =\varphi(u;0,{V_\theta}), \\
	\frac{1}{h_{n}}\,{\mu^{(n)}_{{\theta }}\left[ \{\hat{\mu }_{\theta}-\mu_\theta\}-u\in B_{n}\right] } & \leq H(u),\quad \int
	H(u)du<+\infty ,
	\end{split}
	\label{dens:cond}
	\end{equation}%
	for $\varphi(\cdot;0,{V_\theta})$ the density of a mean-zero normal random variable with variance $V_\theta$.\end{assumption}

\begin{assumption}\label{ass:nine} The prior density $\pi$ exists and is such that:
	{(i)} For $\theta _{\star}\in \text{Int}({\Theta })$, $%
	\pi(\theta _{\star})>0$; {(ii)} For some $\delta>0$, the density function 
	$\pi(\cdot)$ is continuously differentiable for all $\|\theta -\theta
	_{\star}\|\leq \delta $; {(iii)} For $\Theta \subset \mathbb{R}^{d_\theta}$, and some $\beta>d_\theta$, we have $\int_{\Theta
	}\|\theta \|^{\beta }\pi(\theta )d\theta <+\infty $.
\end{assumption}

\begin{remark}
	Assumptions \ref{ass:5a} requires that the model is correctly specified but only in terms of the norm $\|\cdot\|_{\mathcal{P}}$. That is, if the model is not correctly specified, then it is entirely possible that the value of $\theta_\star$ under which Assumptions \ref{ass:5a} is satisfied may differ across choices of $\|\cdot\|_{\mathcal{P}}$. Assumption \ref{ass:6} implicitly imposes smoothness on the map $\theta\mapsto\mu_\theta$ in relation to the chosen norm, and is often referred to as norm differentiability. Assumptions \ref{ass:seven} specifies that the empirical measure satisfies a uniform CLT in a neighborhood of $\theta_\star$. Assumption \ref{ass:eight} is similar to Assumption 8 in \cite{FMRR2016} and gives the requisite regularity needed to deduce asymptotic normality of the posterior.
\end{remark}
Define the following point estimator of $\theta_\star$
\begin{flalign}\label{eq:md1}
\hat{\theta}_n&:=\argmin_{\theta\in\Theta}\|\mu_{\theta}-\hat\mu_n\|_\mathcal{P},
\end{flalign}where we highlight that $\mu_\theta$ in \eqref{eq:md1} is not estimated via simulation but is known. As such, in many models, there is no reason to suspect that $\hat\theta_n$ is feasible to obtain in practice. 
\begin{theorem}
	\label{normal_thm} Under Assumptions \ref{ass:1}-\ref{ass:nine},  if $\sqrt{n}\epsilon _{n}=o(1)$, for any $A\in\Theta$,
	\begin{equation}
\lim_{n\rightarrow+\infty}\Pi _{\epsilon }\left[ \sqrt{n}( \theta-\hat\theta_n)\in A|y_{1:n}\right] =\int_{A}\varphi(x;0,V_{\theta_\star})\d x,  \label{anennorm}
\end{equation}%where 
\end{theorem}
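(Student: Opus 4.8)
The plan is to follow the localization strategy used for Bernstein--von Mises results in ABC (as in \cite{FMRR2016}), adapted here to the empirical-measure setting. First I would change variables to the local parameter $t=\sqrt{n}(\theta-\theta_\star)$ and write the unnormalized posterior density of $t$ as $\pi(\theta_\star+t/\sqrt{n})\,g_n(\theta_\star+t/\sqrt{n})$, where $g_n(\theta):=\mu^{(n)}_{\theta}[\|\hat\mu_{\theta,n}-\hat\mu_n\|_\mathcal{P}\le\epsilon_n]$ is the ABC acceptance probability, i.e.\ the integral over the simulated data conditional on $y_{1:n}$. By the posterior concentration of Theorem \ref{thm:one} and Corollary \ref{thm:cor1}, the posterior mass outside a shrinking neighborhood of $\theta_\star$ is $o_p(1)$, so it suffices to analyze $g_n$ for $t$ in compact sets and then control the tails separately.

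The core of the argument is to show that, as a function of $t$, $g_n(\theta_\star+t/\sqrt{n})$ is asymptotically proportional to a Gaussian density centered at $\hat t_n:=\sqrt{n}(\hat\theta_n-\theta_\star)$. Writing $W_n:=\sqrt{n}(\hat\mu_n-\mu_\star)$, I would decompose $\hat\mu_{\theta,n}-\hat\mu_n=\{\hat\mu_{\theta,n}-\mu_\theta\}+\{\mu_\theta-\mu_\star\}-\{\hat\mu_n-\mu_\star\}$. Assumption \ref{ass:6}(1) linearizes the middle term as $\sqrt{n}(\mu_{\theta_\star+t/\sqrt{n}}-\mu_\star)=\langle t,\dot\xi\rangle+o(\|t\|)$, while Assumption \ref{ass:seven} controls the fluctuation $\sqrt{n}(\hat\mu_{\theta,n}-\mu_\theta)$ and gives $W_n\Rightarrow\mathbb{G}_{\theta_\star}$ under $\mu^{(n)}_{\star}$. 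Since $\sqrt{n}\epsilon_n=o(1)$, the acceptance event rescales to requiring $\sqrt{n}(\hat\mu_{\theta,n}-\mu_\theta)$ to land in a vanishing ball around the shift $u_t:=-\langle t,\dot\xi\rangle+W_n$; the local-limit condition of Assumption \ref{ass:eight} then converts this small-ball probability into $h_n\,\varphi(u_t;0,V_{\theta_\star})(1+o(1))$, whose $t$-dependence is exactly Gaussian through $\langle t,\dot\xi\rangle$. Assumption \ref{ass:6}(2)--(3) guarantee this is a nondegenerate Gaussian in $t$, via $H_\star=\E[\dot\xi\dot\xi']$ and the two-sided bound $C_1\|t\|\le\|\langle t,\dot\xi\rangle\|\le C_2\|t\|$. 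Identifying the center amounts to matching $\langle t,\dot\xi\rangle$ to $W_n$ in the $\|\cdot\|_\mathcal{P}$-projection sense, which by the normal equations $H_\star\hat t_n=\langle\dot\xi,W_n\rangle$ coincides with the local expansion of $\hat\theta_n=\argmin_{\theta}\|\mu_\theta-\hat\mu_n\|_\mathcal{P}$; this is precisely why the limit is centered at $\hat\theta_n$ rather than at $\theta_\star$.

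With the Gaussian shape of $g_n$ established, I would combine it with the prior: Assumption \ref{ass:nine}(i)--(ii) give $\pi(\theta_\star+t/\sqrt{n})\to\pi(\theta_\star)>0$ uniformly on compacts, so the prior factor is asymptotically flat and contributes only a constant that cancels in the normalizing integral, as does the $t$-free prefactor $h_n$. Dividing by the normalizing constant then yields pointwise convergence of the recentered local posterior density to $\varphi(x;0,V_{\theta_\star})$. Finally, to upgrade from pointwise density convergence to the stated convergence of $\Pi_\epsilon[\sqrt{n}(\theta-\hat\theta_n)\in A\mid y_{1:n}]$ for arbitrary $A$, I would invoke a Scheff\'e/Glick-type argument: the integrable envelope $H$ in Assumption \ref{ass:eight} dominates the density, and the prior moment bound in Assumption \ref{ass:nine}(iii) controls the contribution of large $\|t\|$, so the pointwise limit integrates to the claimed Gaussian measure.

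The main obstacle is the second step: rigorously turning the acceptance probability $g_n(\theta_\star+t/\sqrt{n})$ into a Gaussian in $t$ uniformly over the relevant range of $t$. This requires simultaneously (i) controlling the linearization error of Assumption \ref{ass:6}(1) uniformly after the $\sqrt{n}$ rescaling, (ii) applying the local-limit expansion of Assumption \ref{ass:eight} at the \emph{shifting} center $u_t=-\langle t,\dot\xi\rangle+W_n$ rather than a fixed $u$, and (iii) using the uniform CLT of Assumption \ref{ass:seven} so that the fluctuation term and the data-dependent shift $W_n$ are handled jointly, ensuring the random center $\hat t_n$ emerges cleanly and matches the expansion of $\hat\theta_n$.
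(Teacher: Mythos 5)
Your proposal is correct and follows essentially the same route as the paper's proof: the same three-term decomposition of $\hat\mu_{\theta,n}-\hat\mu_n$, the same roles for Assumptions \ref{ass:6}--\ref{ass:nine} (linearization, uniform CLT, local-limit density, prior regularity), the same recentering at $\hat\theta_n$ via the vanishing first-order term in the quadratic expansion (your normal-equations argument is the same fact), and the same compact-set-plus-tails control, which the paper implements by splitting into the three regions $\|t\|\leq M$, $M<\|t\|\leq\gamma\sqrt{n}$, and $\|t\|>\gamma\sqrt{n}$ using the polynomial tail bound of Assumption \ref{ass:2}. The only cosmetic differences are that the paper changes variables directly to $t=\sqrt{n}(\theta-\hat\theta_n)$ rather than recentering after localizing at $\theta_\star$, and it concludes by computing the ratio $N_n/D_n$ directly instead of invoking a Scheff\'e-type argument.
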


\begin{corollary}	\label{mean_thm} If Assumptions \ref{ass:1}-\ref{ass:nine} are satisfied, and if $\sqrt{n}\epsilon_n=o(1)$ then, for $\bar\theta_n=\int\theta\d\Pi _{\epsilon }(\theta|y_{1:n})$, 
	\begin{flalign}
\sqrt{n}\|\bar\theta_n-\hat\theta _n\|=o_p(1),\label{normality:postmean}
	\end{flalign}
	%where 
	%\begin{flalign*}
	%\Psi^\text{\em H}_n(\theta_\star)&:=\left(\int \dot{s}_{\theta_\star}\dot{s}_{\theta_\star}'\d\mu_\star+o(1)\right)^{-1}v_n\int \dot{s}_\theta\left\{\hat{f}^{1/2}-f_{\theta_\star}^{1/2}\right\},\;\;\hat{f}_n:=\mu\star K_n.\\
	%\Psi^\text{\em CvM}_n(\theta_\star)&:=\left(\int \dot{s}_{\theta_\star}\dot{s}_{\theta_\star}'\d\mu_\star+o(1)\right)^{-1}v_n\int \dot{s}_\theta\left\{\hat{f}^{1/2}-f_{\theta_\star}^{1/2}\right\},\;\;\hat{f}_n:=\mu\star K_n.
	%\end{flalign*}	
	%
\end{corollary}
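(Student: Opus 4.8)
The plan is to deduce \eqref{normality:postmean} from the Bernstein--von Mises statement in Theorem \ref{normal_thm} by upgrading weak convergence of the local posterior to convergence of its first moment. Introducing the local parameter $t=\sqrt{n}(\theta-\hat\theta_n)$ and writing $\Pi_\epsilon^t(\cdot|y_{1:n})$ for the posterior it induces, a change of variables gives
\begin{equation*}
\sqrt{n}(\bar\theta_n-\hat\theta_n)=\int t\,\d\Pi_\epsilon^t(t|y_{1:n}),
\end{equation*}
so the claim is equivalent to showing this integral is $o_p(1)$. Theorem \ref{normal_thm} already supplies $\Pi_\epsilon^t(\cdot|y_{1:n})\Rightarrow\varphi(\cdot;0,V_{\theta_\star})$, a centered Gaussian whose mean is zero; because the identity map $t\mapsto t$ is unbounded, however, weak convergence does not by itself transfer to the mean, and the substance of the proof is establishing uniform integrability of $\{\|t\|\}$ under the sequence $\{\Pi_\epsilon^t(\cdot|y_{1:n})\}_{n\ge1}$.

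First I would split at a fixed radius $K$. On $\{\|t\|\le K\}$ the map $t\mapsto t$ is bounded and continuous and the limiting Gaussian assigns no mass to the sphere $\{\|t\|=K\}$, so Theorem \ref{normal_thm} yields $\int_{\|t\|\le K} t\,\d\Pi_\epsilon^t(t|y_{1:n})\to\int_{\|t\|\le K} t\,\varphi(t;0,V_{\theta_\star})\,\d t=0$, the final equality holding by oddness of the integrand over a symmetric region. Consequently, for every fixed $K$,
\begin{equation*}
\limsup_{n\to\infty}\bigl|\sqrt{n}(\bar\theta_n-\hat\theta_n)\bigr|\le\limsup_{n\to\infty}\int_{\|t\|>K}\|t\|\,\d\Pi_\epsilon^t(t|y_{1:n}),
\end{equation*}
and it suffices to show the right-hand tail can be made uniformly small (in $n$) by taking $K$ large, which is precisely the uniform-integrability requirement.

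The hard part will be controlling this tail. I would write the posterior in ratio form, $\d\Pi_\epsilon(\theta|y_{1:n})=g_n(\theta)\,\d\Pi(\theta)/\int g_n(\theta)\,\d\Pi(\theta)$, where $g_n(\theta)=\mu_\theta^{(n)}[\|\hat\mu_{\theta,n}-\hat\mu_n\|_{\mathcal{P}}\le\epsilon_n]$ is the ABC acceptance probability, and bound $g_n\le1$ throughout the numerator. The tail $\{\|t\|>K\}$ is $\{\|\theta-\hat\theta_n\|>K/\sqrt{n}\}$, which I would further partition: over an intermediate annulus $K/\sqrt{n}<\|\theta-\theta_\star\|\le\rho$ the identification condition in Assumption \ref{ass:5a}, together with the concentration rate from Theorem \ref{thm:one} and Corollary \ref{thm:cor1}, forces $g_n$ to be negligible; over the far region $\|\theta\|>\rho$ I would invoke the prior moment bound $\int_\Theta\|\theta\|^\beta\,\d\Pi(\theta)<\infty$ with $\beta>d_\theta$ from Assumption \ref{ass:nine}(iii) and Markov's inequality to obtain $\int_{\|\theta\|>\rho}\|\theta\|\,\d\Pi(\theta)\lesssim\rho^{-(\beta-1)}$. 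The denominator $\int g_n\,\d\Pi$ is bounded below by the lower-bound argument underlying Theorem \ref{thm:one} (Assumption \ref{ass:4}), while $\sqrt{n}\epsilon_n=o(1)$ pins down the scale.

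The delicate point, and the main obstacle, is verifying that the combined numerator tail decays strictly faster than this lower bound, so that the ratio still vanishes after multiplication by the factor $\sqrt{n}$ produced by the change of variables; this forces the intermediate-annulus acceptance to decay faster than $n^{-1/2}$ and the prior tail to be thin enough, which is exactly what the concentration rate and the moment exponent $\beta>d_\theta$ are there to guarantee. Once this balance is secured, uniform integrability holds, the bounded and tail pieces combine, and letting $K\to\infty$ gives $\sqrt{n}\|\bar\theta_n-\hat\theta_n\|=o_p(1)$.
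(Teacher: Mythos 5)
Your reduction of the claim to uniform integrability of $\|t\|$ under the local posterior $\Pi_\epsilon^t(\cdot|y_{1:n})$, with the bounded region handled by Theorem \ref{normal_thm} plus the symmetry of the Gaussian limit, is sound, and it is close in spirit to what the paper actually does: the paper's proof works directly with the ratio $N_n/D_n$ in local coordinates and shows the weighted total-variation bound $\int \|t\|\,\bigl\vert P_t^{\ast}[\|\mathbb{G}_n+\langle t,\dot\xi\rangle\|\le h_n+R(t)]-\varphi(t;0,V_{\theta_\star})\bigr\vert\,\pi(\hat\theta_n+t/\sqrt{n})\,\d t=o_p(1)$ over three regions, which is exactly a quantitative form of your uniform-integrability step.

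The genuine gap is in how you propose to control the tail, and it is not a detail you could fill in as stated. First, in the far region your plan is to bound the acceptance probability $g_n(\theta)\le 1$ and invoke the prior moment condition $\beta>d_\theta$. This fails quantitatively: the posterior normalizing constant is of order $\epsilon_n^{d_\theta}$ (this is what $D_n/h_n^{d_\theta}=1+o_p(1)$ means once the change-of-variables Jacobian $n^{-d_\theta/2}$ is undone, since $h_n=\sqrt{n}\epsilon_n$), so your far-region contribution is of order $\sqrt{n}\,\rho^{1-\beta}/\epsilon_n^{d_\theta}$, which diverges for fixed $\rho$ no matter how large $\beta$ is. Second, in the intermediate annulus you appeal to Theorem \ref{thm:one} and Corollary \ref{thm:cor1}, but those control posterior \emph{mass}, not first moments: mass $o_p(1)$ spread over $\|t\|\le\gamma\sqrt{n}$ is compatible with a tail first moment of order $\gamma\sqrt{n}\cdot o_p(1)$, which is indeterminate. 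The indispensable ingredient, which your proposal never invokes, is Assumption \ref{ass:2}: by identification and norm differentiability (Assumption \ref{ass:6}(iii)), acceptance at a $\theta$ with $\|t\|=\sqrt{n}\|\theta-\hat\theta_n\|$ large forces $\|\hat\mu_{\theta,n}-\mu_\theta\|\gtrsim \|t\|/(2\sqrt{n})$, and Assumption \ref{ass:2} bounds that probability by $c(\theta)\|t\|^{-\kappa}$. This polynomial decay, with $\kappa>2$ for the annulus (giving $\int_{\|t\|>M}\|t\|^{1-\kappa}\,\d t\lesssim M^{-\kappa+2}$) and $\kappa>d_\theta$ for the far region (giving a contribution $O(n^{d_\theta/2-\kappa}\epsilon_n^{-d_\theta})=o(1)$ against the $\epsilon_n^{d_\theta}$ denominator), is what actually delivers uniform integrability in the paper's Regions (ii) and (iii); prior moments and posterior concentration alone cannot. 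You flag this balance as ``the main obstacle,'' but the tools you list to resolve it are the wrong ones.
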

\begin{remark}
Theorem \ref{mean_thm} implies that the MD-ABC posterior mean is asymptotically equivalent to the minimum distance estimator in \eqref{eq:md1}, even in cases where $\hat\theta_n$ can not be feasibly computed. Consequently, under correct model specification, the MD-ABC posterior mean based on the Hellinger distance is an asymptotically efficient estimator of $\theta_\star$ (since the MHD estimator is itself asymptotically equivalent to the maximum likelihood estimator, see, e.g, \citealp{beran1977minimum}). Therefore, the MD-ABC posterior mean, under the Hellinger distance, is asymptotically as efficient as the MLE. Given the asymptotic equivalence between the MLE and the exact Bayes posterior mean, satisfied under regularity, we can conclude that the MD-ABC posterior mean is asymptotically as efficient as that obtained under exact Bayesian inference (even when the latter is infeasible to obtain).
\end{remark}
\begin{remark}
	While MD-ABC and WABC are both based on the entire sample of observed data, $y_{1:n}$, it is unclear whether point estimators obtained from WABC satisfy a result similar to that in Theorem \ref{mean_thm}. This makes directly comparing the behavior of such estimators difficult, and so throughout any comparison with WABC must proceed via simulation. 
\end{remark}

\begin{remark}
	Under similar conditions to those maintained herein, \cite{pollard1980minimum} demonstrates that $\sqrt{n}(\hat\theta_n-\theta_\star)\Rightarrow \inf_{t\in\mathbb{R}^{d_\theta}}\|\mathbb{G}_{\theta_\star}-\langle t,\dot\xi\rangle\|_\mathcal{P}$. When this minimum is unique, and for $\mathbb{G}_{\theta_\star}$ denoting a Gaussian process, the limiting distribution will often have a standard asymptotically linear form. For instance, if $\|\cdot\|_\mathcal{P}$ is taken to be the Hellinger or CvM distances, this minimum will be unique (i.e., under the identification condition in Assumption \ref{ass:5a}). That being said, there is generally no guarantee that this minimum will be unique and in such case all we can affirm is that  $\sqrt{n}(\hat\theta_n-\theta_\star)$, and hence $\sqrt{n}(\bar\theta_n-\theta_\star)$, will behave like one of these minimizers.
\end{remark}

\subsection{Robustness}\label{sec:robust}
Recall that the dual objective of MD-ABC was to deduce a method that was more efficient than summary-statistic based ABC, but which simultaneously displays meaningful robustness to deviations from the underlying modeling assumption. Corollary \ref{mean_thm} demonstrates that MD-ABC satisfies the first of these tenants, and in this section we formally demonstrate that MD-ABC point estimators display robustness to perturbations from the assumed model. Throughout the remainder, we refer to such a setting as local model misspecification.\footnote{We refer the reader to \cite{frazier2020model} for an analysis of ABC methods under global model misspecification.}

Let 
 $T=\{T_n\}_{n\ge1}$, $T_n:\mathcal{M}^{(n)}\rightarrow\Theta$ denote a functional of interest. For all $Q\in\mathcal{M}^{(n)}$, define the map $Q\mapsto \bar{T}(Q)$ by 
$$
\bar{T}(Q):=\argmin_{t\in\Theta} \int \frac{(\sqrt{n}\{t-\theta\})\d\Pi(\theta)\int_{\mathcal{Y}^n}\mathds{1}\left[\|\hat{\mu}_{\theta,n}-Q\|_\mathcal{P}\leq\epsilon\right]\d\mu_\theta^{(n)}(z_{1:n})}{\int_{{\Theta}}\d\Pi(\theta)\int_{\mathcal{Y}^n}\mathds{1}\left[\|\hat{\mu}_{\theta,n}-Q\|_\mathcal{P}\leq\epsilon\right]\d\mu_\theta^{(n)}(z_{1:n})}.
$$
The MD-ABC posterior mean corresponds to $\bar{\theta}_n=\bar{T}_{}(\hat{\mu}_n)$. 

 We now analyze the behavior of the map $Q\mapsto \bar{T}_{}(Q)$ as we vary the true process that has generated the data in a neighborhood of the assumed model. To this end, following, among others, \cite{rieder2012robust}, we assume that the observed data is generated from a member in the class of \textit{simply perturbation} from the assumed model $\mu^{(n)}_\theta$ along the tangent $\zeta_n$: for and $t\in\Theta$, $\theta_n:=\theta_\star+t/\sqrt{n}$, observed data is generated according to\footnote{Similar to the analysis in Section \ref{sec:asym:postmean}, in this section we only consider the specific case where $v_n=v_m=\sqrt{n}$ to simplify the analysis.}
 $$
 \d \mu^{(n)}_{\theta_n,\zeta_n}=\left(1+\frac{t^{\intercal}\zeta_n}{\sqrt{n}}\right)\d\mu^{(n)}_{\theta_\star},\quad\lim _{n \rightarrow\infty} \int(\zeta_n-\zeta)^2\d\mu_\theta=0, \quad \left|\zeta_{n}\right|=o(\sqrt{n}). 
 $$It is important to note that the form of model misspecification induced by the simple perturbations $\mu_{\theta_n,\zeta_n}$ excludes the case of global model misspecification analyzed in \cite{frazier2020model}. Unlike the clase of global model misspecification, the ``modelling bias'' associated with the simple perturbations eventually converges to zero.

 Define the neighborhood of radius $r/\sqrt{n}$ around $\mu_\star$, based on the distance $d(\cdot,\cdot)$, as follows:
 $$\mathcal{B}_{d}\left(\mu_\star, r / \sqrt{n}\right)=\left\{\mu \in \mathcal{M}^{}: d\left(\mu^{}, \mu_\star^{}\right) \leq r / v_n\right\}.$$ Implicitly, the neighborhood $\mathcal{B}_{d}\left(\mu_\star, r / \sqrt{n}\right) $ sets the classes of models under which we will analyze the behavior of functionals $Q\mapsto T(Q)$ and statistics obtained from $T(\cdot)$. Optimality of these functionals is defined relative to a given neighborhood system in which we measure the deviations from the model. 
 
 Herein, we consider two common types of neighborhoods: Hellinger and Cramer-von Mises (CvM) neighborhoods. The Hellinger neighborhood replaces the generic distance $d$ in $\mathcal{B}_d$ with the Hellinger distance, see Definition \ref{def:hell}, and we denote the neighborhood by $\mathcal{B}_{\has}$. To define the CvM neighborhood, let $\lambda$ be a $\sigma$-finite measure,  and let $Q(y)$ and $P(y)$ denote the CDF of $Q$ and $P$ at the point $y\in\mathcal{Y}$. Then, the CvM neighbourhood is defined as 
 $$
\mathcal{B}_{\mathcal{C}}\left(\mu_\star, r / \sqrt{n}\right),\text{ where } \mathcal{C}(Q,P):=\int_{\mathcal{Y}} \left|Q(y)-P(y)\right|\lambda(\d y).
 $$

We now analyze the robustness of MD-ABC to deviations from the model that lie in $\mathcal{B}_{d}$, for $d\in\{\has,\mathcal{C}\}$, and compare this behavior to any other functional $\mu\mapsto T_\eta(\mu)$ computed from summary statistics $\eta$. We carry out this analysis in the case where the functionals are regular and Fisher consistent.  Let $T_a\in\{\bar{T},T_\eta\}$, $T_a:\mathcal{M}\mapsto\Theta$ be an estimator of $\theta_\star$. 

\begin{definition}\label{def:pert}
	Let $\mu_{\theta,\zeta_n}$ denote a sequence of \textit{simple perturbations} of $\mu_\theta$ along the tangent $\zeta_n$ such that: $\mu_{\theta,0}=\mu_\theta$ and, eventually, for $t\in\Theta$ such that $\mu_{\theta+t/\sqrt{n},\zeta_n}\in\mathcal{B}_{d}(\mu_\theta,r/\sqrt{n})$.
	\begin{itemize}
		\item[(i)] $T_a$ is \textit{Fisher consistent} if
		$
		\sqrt{n}\left[T_a\left(\mu_{\theta_\star+t/\sqrt{n},\zeta_n}\right)-\theta\right]=t + o(1).
		$ The function $\varphi_\theta$ is called the \textit{influence curve} of $T_a$ at $\mu_\theta$. 
		\item[(ii)] $T_a$ is called \textit{regular} for $\theta$ if for every $(\mu_{\theta+t/\sqrt{n},\zeta_n})_{n\ge1}$, every $t\in\Theta$, there exists some measure $M$, not depending on the sequence $\{(\theta+t/\sqrt{n})',\zeta_n')'\}_{n\ge1}$ such that 
		$$
		\sqrt{n}\left[T_a(\hat\mu_n)-T_a\left(\mu_{\theta+t/\sqrt{n},\zeta_n}\right)\right]\Rightarrow M,\text{ under }\mu_{\theta+t/\sqrt{n},\zeta_n}.
		$$
	\end{itemize}	
\end{definition}

Let $\tau:\Theta\mapsto\mathbb{R}$ be a (possibly) nonlinear transformation of the parameter with a bounded first derivative.\footnote{The use of $\tau$ simplifies the calculation of bias and variance terms in the following calculations.} The following result describes the robustness of MD-ABC functionals to deviations from the underlying modelling assumptions. 
\begin{theorem}\label{thm:bias}For each $r>0$, $d\in\{\mathcal{C},\has\}$, and for $L_d^*=(\d\tau(\theta_\star)/\d\theta)H^{-1}_{\star,d}(\d\tau(\theta_\star)/\d\theta)'$, where $H_{\star,d}$ is defined as in Assumption \ref{ass:6},
	\begin{itemize}
		\item[(i)]For all functionals, $T_a$, satisfying Definition \ref{def:pert}, $$\liminf _{n \rightarrow \infty} \sup _{Q \in \mathcal{B}_{\has}\left(\mu_{\star}, r / \sqrt{n}\right)} n\left\{\tau \circ T_{a}(Q)-\tau\left(\theta_{0}\right)\right\}^{2} \geq 4 r^{2} L^{*}_\has,$$  and $$\liminf _{n \rightarrow \infty} \sup _{Q \in \mathcal{B}_{\mathcal{C}}\left(\mu_{\star}, r / \sqrt{n}\right)} n\left\{\tau \circ T_{a}(Q)-\tau\left(\theta_{0}\right)\right\}^{2} \geq r^{2} L^{*}_\mathcal{C}.$$ 
		\item[(ii)] If $\|\cdot\|_{\mathcal{P}}$ is the Hellinger distance, $\lim _{n \rightarrow \infty} \sup _{Q \in B_{\has}\left(\mu_{\star}, r / \sqrt{n}\right)} n\left\{\tau \circ \bar{T}_{}(Q)-\tau\left(\theta_{0}\right)\right\}^{2}=4 r^{2} B^{*}_\has$.
		\item[(iii)] If $\|\cdot\|_{\mathcal{P}}$ is the CvM distance,
		$\lim_{n \rightarrow \infty} \sup _{Q \in B_{\mathcal{C}}\left(\mu_{\star}, r / \sqrt{n}\right)} n\left\{\tau \circ \bar{T}_{}(Q)-\tau\left(\theta_{0}\right)\right\}^{2}=r^{2} B^{*}_{\mathcal{C}}$
	\end{itemize}
\end{theorem}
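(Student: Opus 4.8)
The plan is to reduce the entire statement to a computation of the influence curve of the minimum distance functional, followed by a Cauchy--Schwarz optimization over the perturbation tangents that each neighborhood admits. First I would invoke Corollary \ref{mean_thm}, which shows $\bar{T}$ is asymptotically equivalent to the minimum distance functional $Q\mapsto\argmin_{\theta}\|\mu_\theta-Q\|_{\mathcal{P}}$, so that it suffices to analyze the latter. Under the norm differentiability of Assumption \ref{ass:6}, the defining first-order condition $\langle\mu_{\hat\theta(Q)}-Q,\dot\xi\rangle=0$ can be differentiated in $Q$ along a tilt direction $g$: since the derivative of $\theta\mapsto\mu_\theta$ is $\dot\xi$, one obtains $\langle\dot\xi\dot\xi',\partial\hat\theta\rangle=\langle g,\dot\xi\rangle$, hence $\partial\hat\theta=H_{\star,d}^{-1}\langle g,\dot\xi\rangle$. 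This identifies the influence curve as $\varphi_{\theta_\star,d}=H_{\star,d}^{-1}\dot\xi$, where the subscript $d$ records that $\dot\xi$, $\langle\cdot,\cdot\rangle$, and $H_{\star,d}=\E[\dot\xi\dot\xi']$ are those attached to the Hellinger norm or the CvM norm respectively.

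Next I would evaluate this expansion along the simple-perturbation path $Q=\mu_{\theta_\star,\zeta_n}$, for which $\sqrt{n}(Q-\mu_{\theta_\star})\to\zeta$, to get the deterministic bias $\sqrt{n}\{\bar{T}(Q)-\theta_\star\}=H_{\star,d}^{-1}\langle\zeta,\dot\xi\rangle+o(1)$, and then the delta method gives $\sqrt{n}\{\tau\circ\bar{T}(Q)-\tau(\theta_\star)\}=(\d\tau(\theta_\star)/\d\theta)\,H_{\star,d}^{-1}\langle\zeta,\dot\xi\rangle+o(1)$. For a general functional $T_a$ satisfying Definition \ref{def:pert}, the same computation produces a bias that is the linear functional $\zeta\mapsto(\d\tau/\d\theta)\langle\varphi_{\theta_\star},\zeta\rangle$ of the tangent, and the Fisher-consistency requirement, evaluated along the pure model direction $g=\dot\xi t$, forces the normalization $\E[\varphi_{\theta_\star}\dot\xi']=I$. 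Thus $n\{\tau\circ T_a(Q)-\tau(\theta_\star)\}^2$ converges to the squared bias, and the supremum over the neighborhood reduces to maximizing $\{(\d\tau/\d\theta)\langle\varphi_{\theta_\star},\zeta\rangle\}^2$ over all admissible tangents.

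I would then translate the neighborhood radius into a bound on the Hilbert-space norm of $\zeta$. For the Hellinger ball, the quadratic expansion $\has(\mu_{\theta_\star,\zeta},\mu_\star)^2=\tfrac14 n^{-1}\|\zeta\|^2+o(n^{-1})$ turns $\has\le r/\sqrt{n}$ into $\|\zeta\|\le 2r$, which is exactly the source of the factor $4$; the CvM ball, being an $L_1$-type constraint on the CDFs, yields the analogous bound with unit constant and hence the factor $1$. A Cauchy--Schwarz step then gives, for any Fisher-consistent $\varphi_{\theta_\star}$, the worst-case squared bias $c_d^2 r^2\,(\d\tau/\d\theta)\E[\varphi_{\theta_\star}\varphi_{\theta_\star}'](\d\tau/\d\theta)'$ with $c_\has=2$, $c_{\mathcal C}=1$; minimizing $\E[\varphi_{\theta_\star}\varphi_{\theta_\star}']$ subject to $\E[\varphi_{\theta_\star}\dot\xi']=I$ is solved by the projection $\varphi_{\theta_\star}=H_{\star,d}^{-1}\dot\xi$, giving $\E[\varphi_{\theta_\star}\varphi_{\theta_\star}']\succeq H_{\star,d}^{-1}$ in the Loewner order, which yields the lower bounds $4r^2L^*_\has$ and $r^2L^*_{\mathcal C}$ of part (i). For parts (ii) and (iii) I would observe that the influence curve of $\bar{T}$ computed in the first step is precisely this minimizing $\varphi_{\theta_\star}$, so equality holds at the stated constants $B^*_\has$, $B^*_{\mathcal C}$; for the Hellinger case this recovers the classical efficiency of the minimum Hellinger distance estimator (\citealp{beran1977minimum}), and for CvM it follows from the representation in \cite{pollard1980minimum}.

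The step I expect to be the main obstacle is justifying the asymptotic-linearity (von Mises) expansion \emph{uniformly} over the shrinking neighborhood and \emph{under the moving generating measure}, rather than only at $\mu_{\theta_\star}$ as in Corollary \ref{mean_thm}. Transporting that equivalence to the perturbed sequence requires the contiguity of $\mu_{\theta_\star,\zeta_n}$ to $\mu_{\theta_\star}$ via LeCam's third lemma, and controlling the remainder uniformly in $\zeta\in\mathcal{B}_d$ so that the $o(1)$ terms can be interchanged with the supremum and recover clean $\liminf/\limsup$ values. This uniform control rests on stochastic equicontinuity of the empirical process $\sqrt{n}(\hat\mu_n-\mu_\theta)$ supplied by Assumption \ref{ass:seven}; once the influence curve is in hand, the Cauchy--Schwarz optimization and the matching of the constants $4r^2$ and $r^2$ are comparatively mechanical.
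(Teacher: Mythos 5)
Your overall architecture (influence curve of the minimum distance functional, Cauchy--Schwarz over admissible tangents, equality for the MD functional via Corollary \ref{mean_thm}) is close in spirit to the paper's, and your constant bookkeeping is even cleaner, but there are two genuine gaps. First, in part (i) your lower bound rests on a Hampel-type variational argument: every competitor $T_a$ is assumed to admit an asymptotically linear (von Mises) expansion with influence curve $\varphi_{\theta_\star}$, Fisher consistency is converted into the normalization $\E[\varphi_{\theta_\star}\dot\xi^{\intercal}]=I$, and minimizing $\E[\varphi_{\theta_\star}\varphi_{\theta_\star}^{\intercal}]$ subject to that constraint gives $H_{\star,d}^{-1}$. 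But Definition \ref{def:pert} grants only Fisher consistency along simple perturbations and regularity; it does not supply a linear expansion for an arbitrary $T_a$, so your argument proves the bound only for the asymptotically linear subclass. The paper avoids this entirely: it lower-bounds the supremum over $\mathcal{B}_d(\mu_\star,r/\sqrt{n})$ by the supremum over the one-parameter family of simple perturbations $\mu_{\theta_n,\zeta_n}$ lying in the ball, on which Definition \ref{def:pert}(i) alone pins the bias, $\sqrt{n}\{\tau\circ T_a(\mu_{\theta_n,\zeta_n})-\tau(\theta_\star)\}=(\d\tau(\theta_\star)/\d\theta)\,t+o(1)$; the expansion $n\,d^2(\mu_{\theta_n,\zeta_n},\mu_\star)\rightarrow c\,t^{\intercal}H_{\star}t$ then turns the ball constraint into an ellipsoid in $t$, and maximizing the linear form over that ellipsoid yields the constant with no influence curves at all.

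Second, in parts (ii)--(iii) you correctly flag the crux---validity of the linear expansion of $T_h(Q)=\argmin_\theta \has\{\mu_\theta,Q\}$ (resp.\ $T_c$) uniformly over the shrinking neighborhood---but the tools you propose (contiguity, LeCam's third lemma, stochastic equicontinuity of $\sqrt{n}(\hat\mu_n-\mu_\theta)$) address the wrong problem. The supremum in the theorem is taken over deterministic measures $Q$, so no data and no moving laws enter; what is needed is a deterministic uniform differentiability statement: for \emph{every} sequence $Q_n\in\mathcal{B}_d(\mu_\star,r/\sqrt{n})$, $\sqrt{n}\{T_h(Q_n)-\theta_\star\}=2\sqrt{n}\int\zeta_{\theta_\star}\,\d Q_n+o(1)$, and analogously for $T_c$. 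Empirical-process and contiguity arguments cannot produce this. The paper imports it wholesale from \cite{rieder2012robust}: Theorem 6.3.4 there for the Hellinger/CvM differentiability and the form of the influence curve, Theorem 6.4.5 (Hellinger) and Proposition 6.4.7 (CvM) for the expansions uniform over the shrinking balls; Corollary \ref{mean_thm} is then used only for the final transfer from $T_h$, $T_c$ to $\bar{T}$. Without these results (or an equivalent uniform differentiability proof, which is the actual hard content) your parts (ii)--(iii) do not close. A minor point in your favor: your expansion $\has^2\approx\frac{1}{4}n^{-1}\|t^{\intercal}\zeta\|^2$ is the one consistent with Definition \ref{def:hell} and reproduces the factor $4$ in the statement, whereas the paper's own proof works with the constant $1/8$.
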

\begin{remark}
	Theorem \ref{thm:bias} demonstrates that the functional $\bar{T}_{}(\cdot)$ has the smallest asymptotic bias to deviations within a Hellinger neighborhood (resp., CvM neighborhood) when we take $\|\cdot\|_{\mathcal{P}}$ to be the Hellinger metric (resp., CvM distance). This result implies that if one is conducting inference via ABC, no choice of summary statistics, or other metric if using MD-ABC, will yield a functional that has smaller bias (to deviations within this neighborhood). This result compliments existing result on the robustness of minimum distance functional (see, e.g., \citealp{rieder2012robust}, \citealp{donoho1988automatic}) to the case where the measure is calculated using simulated data. Here, bias corresponds to the deviation of $Q$ from $\mu_\star$, where this deviation is confined to lie in a shrinking Hellinger (resp., CvM) neighborhood of $\mu_\star$.% Furthermore, it can also be show that, see, \cite{rieder2012robust} CH 4, that the bound $4r^2 B^*$ is sharp: the Hellinger-based functional $T$ is asymptotically robust in the minimax sense.  
\end{remark}

\begin{remark}
	Theorem \ref{thm:bias} demonstrates that the MD-ABC functional that results from choosing the Hellinger metric delivers optimally robust point estimators, in terms of bias, to deviations from the model (that are restricted to lie in a Hellinger neighborhood). Furthermore, we know that MD-ABC based on the Hellinger metric achieves the smallest possible asymptotic variance for regular estimators (by Theorem \ref{mean_thm}). Taken together, this implies that MD-ABC based on the Hellinger metric is not only robust to model misspecification, but will be as efficient as possible under correct model specification. 
\end{remark}
\begin{remark}
	We recall that the simple perturbations $\mu_{\theta_n,\zeta_n}$ excludes the case of global model misspecification. Under global model misspecification, the ``modelling bias'' does not converge to zero and such an approach is therefore ill-suited to determine how minor or moderate deviations from the model assumptions, such as for instance under minor data contamination, affect the resulting behavior of the point estimators. Therefore, we regard these results as a compliment to the posterior concentration result given in  \ref{thm:one} and the general results in \cite{frazier2020model}. Additional work is required to determine if the resulting MD-ABC approach is also robust to global model misspecification.
\end{remark}

\section{Additional Examples: Robustness\label{sec:robust_ex}}
The results in Section \ref{sec:robust} suggest that the MD-ABC approaches have useful robustness properties that may not be replicated by alternative inference approaches. In this section, we investigate this claim by comparing the robustness of MD-ABC, WABC and exact Bayes to deviations from the underlying modeling assumptions.

\subsection{Example: Mixture Model Revisited}We now revisit the mixture model to analyze the robustness of the various inference approaches in this context. The assumed data generating process (DGP) for $z_1,\dots,z_n$ is independent and identically distributed (iid) from a two-component mixture: for $\omega\in[0,1]$, recalling that $\varphi(\cdot;\mu,\sigma^2)$ denotes a normal density with mean $\mu$ and variance $\sigma^2$, 
$$
f_\theta(\cdot):=(1-\omega)\varphi(\cdot;\mu,\sigma^2_1)+\omega\varphi(\cdot;-\mu,\sigma^2_2),\;\;\theta:=(\mu,\omega,\sigma^2_1,\sigma^2_2)'.
$$Following \cite{cutler1996minimum}, we examine the situation where the observed data $y_1,\dots,y_n$ is generated iid from the contaminated normal mixture: for $\alpha\in[0,1]$,
$$
f_\star(\cdot):=(1-\alpha)f_\theta(\cdot) + \alpha \varphi(\cdot;\zeta,\nu).
$$The value $\alpha$ determines the level of contamination in the observed data and for $\alpha>0$ the assumed DGP is misspecified. The impact of this assumption on the resulting inferences can be controlled by choosing $\alpha,\zeta,\nu$. 

In this example we compare the robustness of exact Bayesian (EB) inference, ABC based on the Wasserstein distance (WABC), and MD-ABC based on the CvM and Hellinger distances. We fix $\alpha=0.05$, $\nu=0.01$, and we consider a grid of values for $\zeta\in[-10,10]$. The parameters $\theta$ are fixed at $\theta_\star=(-2,0.5,1,1)'$. The observed sample size is again $n=100$, and we set the number of simulated sample points for CvM-ABC to be $n$, i.e., we take $m=n$, and for H-ABC we set $m=2\cdot n$.

Across all experiments, EB inference is implemented using the NUTS sampler: {NUTS was run for 10,000 iterations with an initial warm-up period of 1,000 iterations.} For all ABC approaches, we again target the posterior using the sequential Monte Carlo
(SMC) approach of \citep{del2012adaptive} with $N=1,024$ particles, and with all procedures using $2\cdot10^5$ simulations from the model.

First, we analyze the robustness properties of these approaches by comparing the posteriors when the observed data is simulated using a particularly large value of $\zeta$, namely $\zeta=9$. Figure \ref{fig:mixfig2} displays the posteriors for the four procedures in this example. In this case, we see that the ABC posteriors are similar to those obtained under correct specification (see Figure \ref{fig:mixfig1}) and signal that these methods are somewhat insensitive to the contamination in the data. In contrast, the contamination leads to significantly different outcomes for EB: for all parameters except $\sigma_2$, the EB posteriors have fatter tails and are farther away from the true values used to generate the data than for the ABC posteriors. 
\begin{figure}[H]
	\centering
	\includegraphics[height=.45\textheight,width=1.1\textwidth]{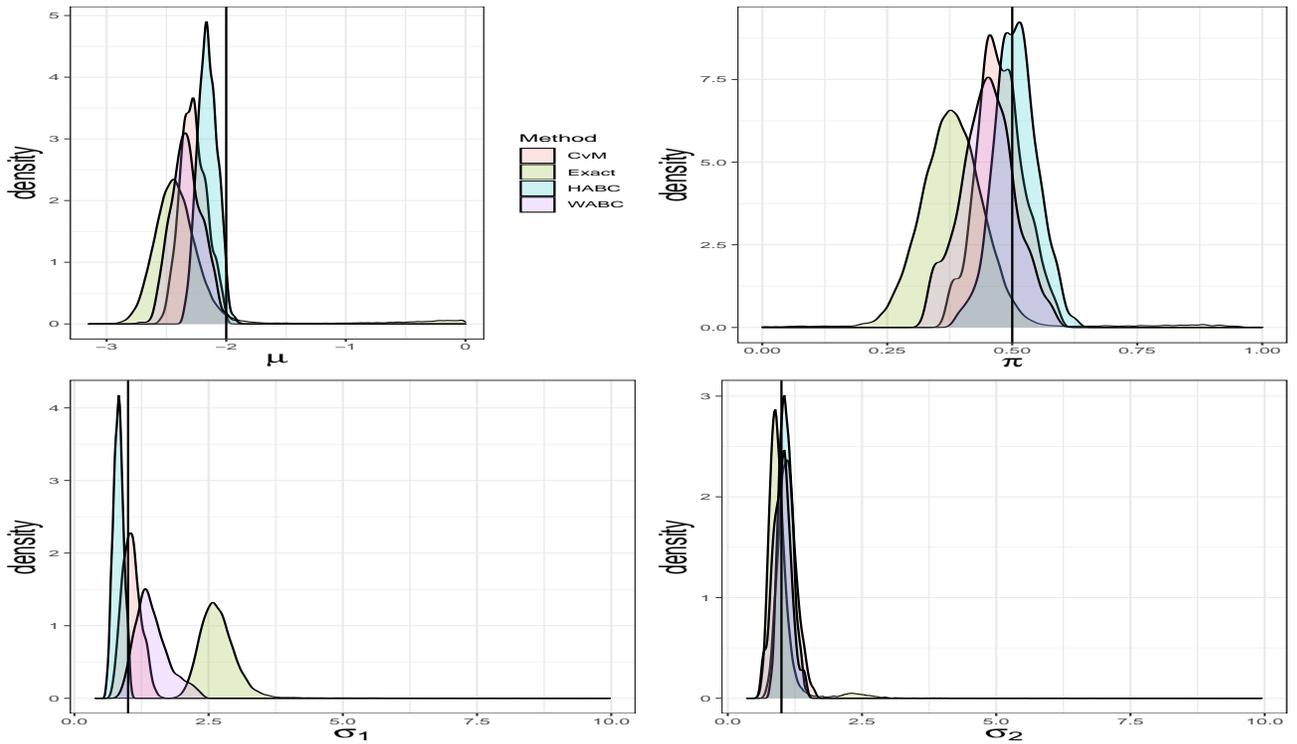}
	\caption{\small Posterior marginals in the mixture model example with data contamination. The values used to generate the data are $\theta_\star=(-2,1,1,0.5)'$ with $\alpha=0.05$ and $\zeta=9$, where the $\theta_\star$ values are denoted by vertical lines. The standard diagnostics tools in \texttt{rstan} detected no significant convergence issues.}
	\label{fig:mixfig2}
\end{figure}

In Figure \ref{fig:mixfig3} we graphically compare the evolution of the posteriors means for these four approaches across a grid of values $\zeta\in[-10,10]$, with incitements of $0.50$. The results suggest that EB can be highly sensitive to the data contamination, while MD-ABC and WABC display significantly less sensitivity to model misspecification. Across the different values of $\zeta$, and across the different methods, MD-ABC based on the CvM distance displays the least amount of sensitivity to model misspecification across the parameters. 

\begin{figure}[H]
	\centering
	\includegraphics[height=.6\textheight,width=1.15\textwidth]{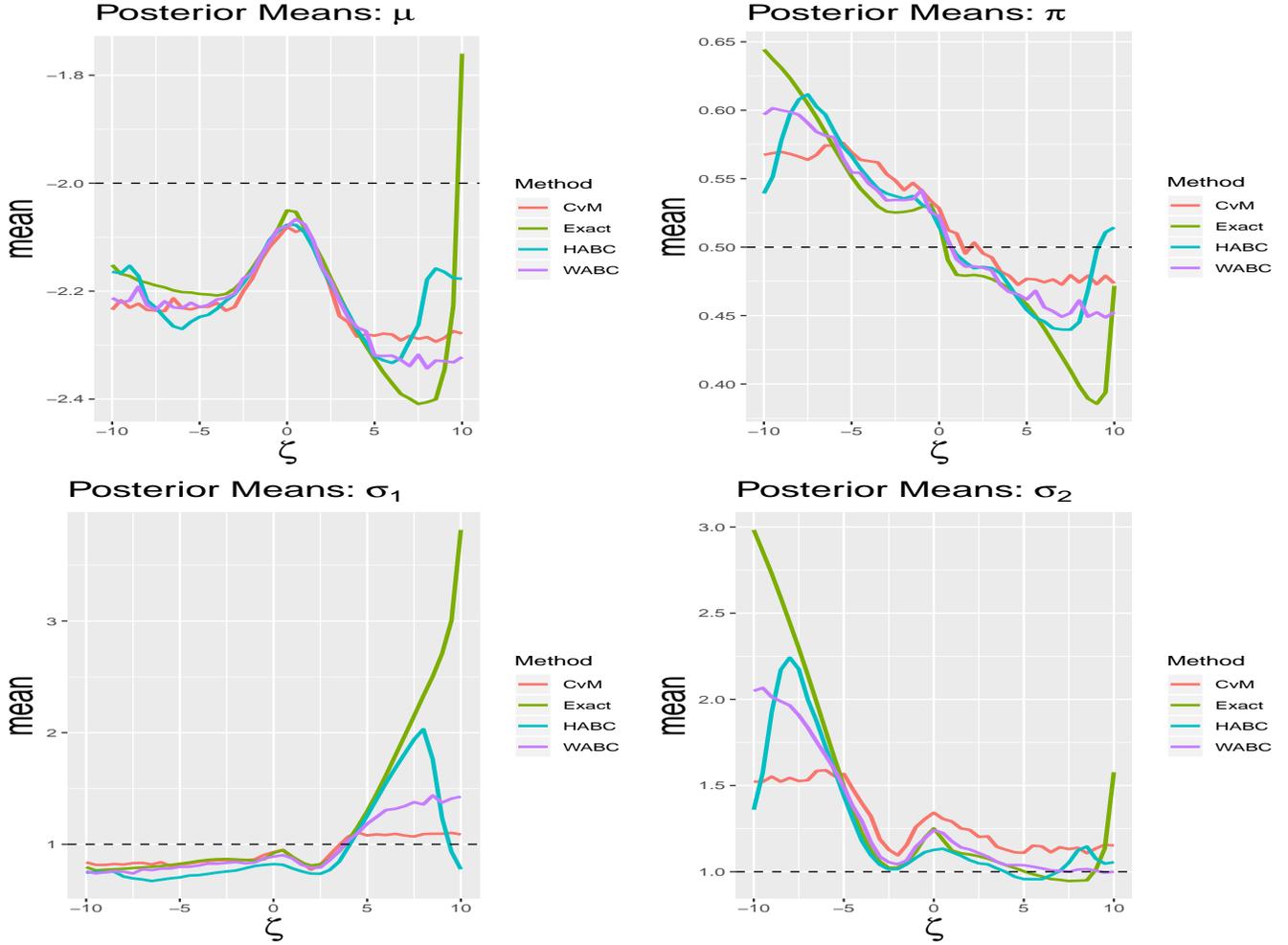}
	\caption{\small Posterior marginal means in the mixture model example with data contamination. The values used to generate the data are $\theta_\star=(-2,0.5,1,1)'$ with $\alpha=0.05$ and across the grid $\zeta\in[-10,10]$. The standard diagnostics tools in \texttt{rstan} detected no convergence issues. The horizontal dashed lines represent the true value of the parameters used in the simulation. }
	\label{fig:mixfig3}
\end{figure}

Figure \ref{fig:mixfig_rep2} displays the marginal posterior means of the four inference approaches across fifty data sets generated from the above design with $\zeta=9$. Collectively, the results in Figures \ref{fig:mixfig2}-\ref{fig:mixfig_rep2} demonstrate that EB is the most sensitive of the methods to this particular form of model misspecification, while MD-ABC based on the CvM distance displays the least amount of sensitive to misspecification, followed closely by MD-ABC based on the Hellinger distance. %Indeed, MD-ABC based on the CvM distance yields posterior means and standard deviations that are the most similar to those obtained under correct model specification (see Figure \ref{fig:mixfig_reps1}).

\begin{figure}[H]
	\centering
	\includegraphics[height=.55\textheight,width=.99\textwidth]{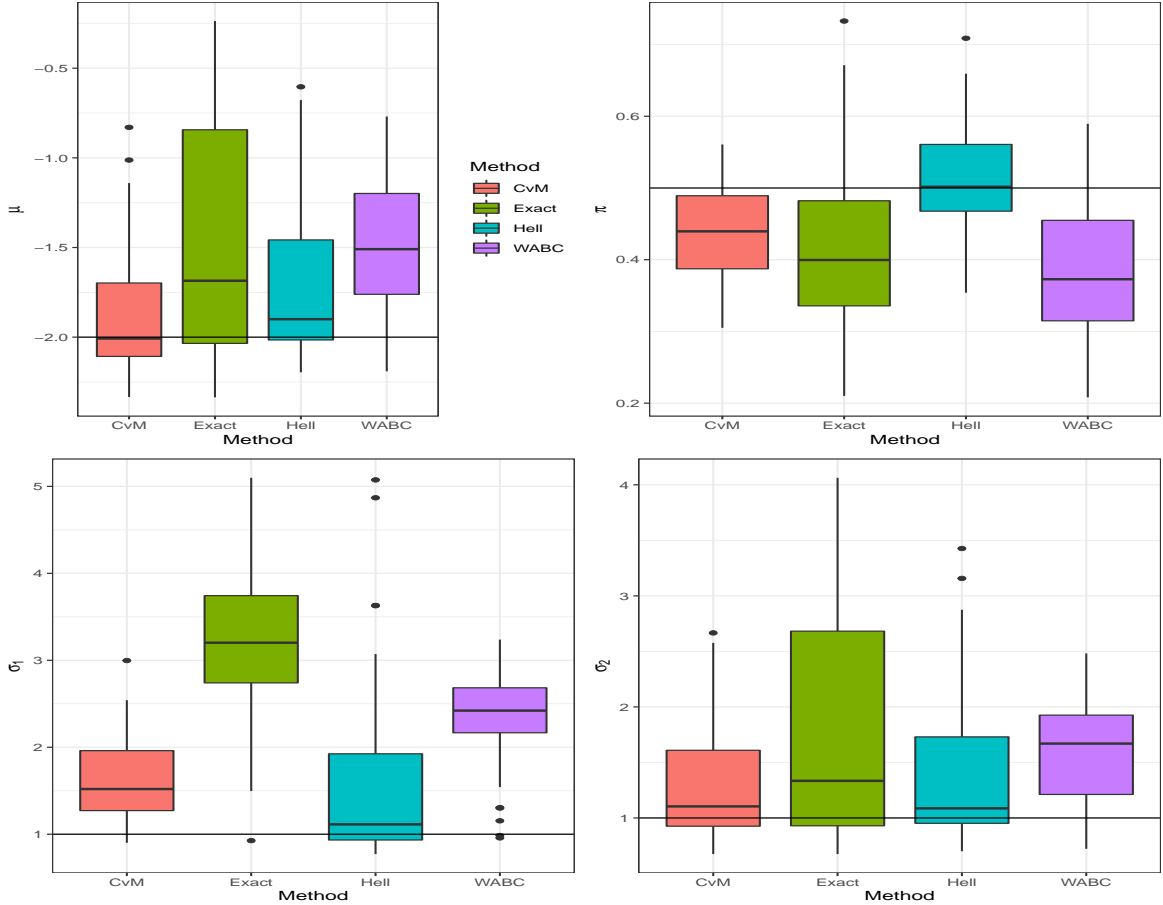}
	\caption{\small Repeated sample posterior marginals means in the mixture model example with data contamination. The values used to generate the data are $\theta_\star=(-2,1,1,0.5)'$ with $\alpha=0.05$ and $\zeta=9$. The standard diagnostics tools in \texttt{rstan} detected no significant convergence issues.}
	\label{fig:mixfig_rep2}
\end{figure}

Table \ref{tab:mix2} compares the posterior means (Means), standard deviation (Std), root mean squared error (RMSE) of the posterior mean and the Monte Carlo coverage (COV) from the replications associated with the results in Figure \ref{fig:mixfig_rep2}. As suggested by Figures \ref{fig:mixfig3} and \ref{fig:mixfig_rep2}, MD-ABC based on the CvM and Hellinger distances displays the least amount of sensitivity to model misspecification, while EB is the most sensitive. Indeed, the posterior means of EB are generally far away from $\theta_\star$, and its standard deviation is generally quite large. Moreover, due to the large bias, EB displays very poor coverage for $\sigma_1$, with the 95\% credible set covering the true value in only 2\% of the repeated samples. In contrast to EB, the marginal MD-ABC posteriors are significantly less sensitive to the model misspecification, with MD-ABC displaying better behavior than EB or WABC across virtually all the accuracy measures.

The repeated sampling results demonstrate that while EB is the gold standard under correct model specification, robust Bayesian approaches should be entertained when model misspecification is a possibility. The robustness of MD-ABC in this example, relative to EB and WABC, is a direct consequence of the methods focus on distance metrics which are known to yield inferences that are less sensitive to perturbations from the underlying modeling assumptions. 

\begin{table}[htbp]
	\centering
	\caption{Repeated sampling result for marginal posteriors in the misspecified mixture model for $\zeta=9.0$. See Table \ref{tab:mix1} for further details.}
	\begin{tabular}{rrrrrrrrrrr}
		&       & \multicolumn{1}{l}{Means} &       &       &       &       &       & \multicolumn{1}{l}{Std} &       &  \\
		& \multicolumn{1}{l}{$\mu$} & \multicolumn{1}{l}{$\pi$} & \multicolumn{1}{l}{$\sigma_1$} & \multicolumn{1}{l}{$\sigma_2$} &       &       & \multicolumn{1}{l}{$\mu$} & \multicolumn{1}{l}{$\pi$} & \multicolumn{1}{l}{$\sigma_1$} & \multicolumn{1}{l}{$\sigma_2$} \\
		CvM   & -1.88 & 0.43 & 1.61  & 1.29   &       & CvM   & 0.30 & 0.08 & 0.42 & 0.54 \\
		Exact & -1.48 & 0.42 & 3.19  & 1.76  &       & Exact & 0.45 & 0.16 & 0.98 & 1.11 \\
		Hell  & -1.73 & 0.50 & 1.58  & 1.37  &       & Hell  & 0.28 & 0.10 & 0.69 & 0.62 \\
		WABC  & -1.49 & 0.39 & 2.31  & 1.61  &       & WABC  & 0.56 & 0.13 & 0.64 & 0.93 \\
		&       & \multicolumn{1}{l}{COV} &       &       &       &       &       & \multicolumn{1}{l}{RMSE} &       &  \\
		& \multicolumn{1}{l}{$\mu$} & \multicolumn{1}{l}{$\pi$} & \multicolumn{1}{l}{$\sigma_1$} & \multicolumn{1}{l}{$\sigma_2$} &       &       & \multicolumn{1}{l}{$\mu$} & \multicolumn{1}{l}{$\pi$} & \multicolumn{1}{l}{$\sigma_1$} & \multicolumn{1}{l}{$\sigma_2$} \\
		CvM   & 88\%  & 86\% & 58\%  & 96\%  &       & CvM   & 0.37  & 0.10  & 0.77  & 0.58 \\
		Exact & 76\%  & 78\%  & 02\%  & 84\%   &       & Exact & 0.85  & 0.14 & 2.34  & 1.32 \\
		Hell  & 84\% & 94\%  & 88\%  & 92\%   &       & Hell  & 0.52  & 0.07 & 1.18   & 0.79 \\
		WABC  & 100\%  & 82\%  & 34\%  & 100\%  &       & WABC  & 0.63  & 0.14  & 1.42   & 0.79 \\
		&       &       &       &       &       &       &       &       &       &  \\
	\end{tabular}%
	\label{tab:mix2}%
\end{table}%

\subsection{Example: Stochastic Volatility}

In this example, we analyze the stability of MD-ABC inference in a time series context where data is generated according to an autoregressive log-normal stochastic volatility model. The observed series $y_{t}$ is generated according to 
$$
\begin{array}{l}{y_{t}=\sigma_{t} v_t} \\ {\log \sigma_{t}^{2}=\pi+\beta \log \sigma_{t-1}^{2}+\sigma_{u} u_{t}, \quad t=1, \ldots, T},\end{array}
$$where $v_t,u_t$ are, for simplicity, $\mathcal{N}(0,1)$ iid random variables. The unknown parameters $\theta=(\pi,\beta,\sigma_u)'$ have the following prior specification
$$
\pi\sim\mathcal{N}(0,1),\delta\sim\mathcal{U}[0,1],\sigma_u\sim\mathcal{U}[0,5].
$$

To analyze the stability of MD-ABC to local deviations from the SV model, we consider a sequence of models where the error process $v_t$ is generated from the normal mixture:
$$
f_{v_t}(\cdot):=(1-\alpha)\varphi(\cdot;0,1)+\alpha\varphi(\cdot;\zeta,.001),\text{ for }\alpha\in[0,1].
$$This particular error process emulates the ``jumps'' that are observed in financial return series, where the parameter $\alpha$ proxies as ``jump intensity/frequency'' and the parameter $\zeta$ proxies for ``jump magnitude''. We generate realizations from the above process with $\alpha=0.05$,  and where $\zeta$ varies on a uniform grid from -5 to 0, with increments of 0.50. Following the simulation design of \cite{jacquier2002bayesian}, we fix $\theta=(-0.736,0.900,0.363)'$. The sample size is set to $n$ = 500 across all experiments. 

To assure convergence to the stationary distribution, the return sequence is generated using a sample size of 1000, where the first 500 observations are discarded. The same procedure is carried out when simulating samples for ABC. 

In this example, we compare the posterior means and standard deviations of MD-ABC, based on the CvM and Hellinger distance, against those from the exact Bayes (EB), and WABC. The exact posterior is estimated using the NUTS sampler (with the same specification as in Example \ref{sec:mix1}). For MD-ABC based on the Hellinger distance, we use $m=10\cdot n$ simulated observations, and for MD-ABC based on the CvM distance, we use $m=n$. All ABC approaches used the sequential Monte Carlo
(SMC) approach of \citep{del2012adaptive} with $N=1,024$ particles, and $2\cdot10^5$ simulations from the model. 

Figures \ref{fig:svfig1} and \ref{fig:svfig2} present the results for the posterior means and standard deviations, respectively, across the difference values of $\zeta$. The results demonstrate that there is reasonable agreement across all methods when $\zeta$ is small, and indicates that MD-ABC can be reliably applied to time-series settings. However, for smaller values of $\zeta$, i.e., more negative values, the posterior means from EB are severely affected by the jump components and depart significantly from the true values used to generate the data. In addition, the posterior standard deviations for EB fluctuate significantly over the grid of values for $\zeta$. In contrast, MD-ABC point estimates of $\pi$ and $\beta$, as well as their posterior standard deviations, are virtually unaffected by the jumps in the data; inference on $\sigma$ is also less sensitive but does display some sensitivity. These results reinforce that the fact that even under relatively mild model misspecification, the reliability of EB-based inference can rapidly degraded; even when only five percent of the observed data does not agree with the assumed model, as in this example. 

\begin{figure}[H]
	\centering
	\includegraphics[height=.75\textheight,width=.75\textwidth]{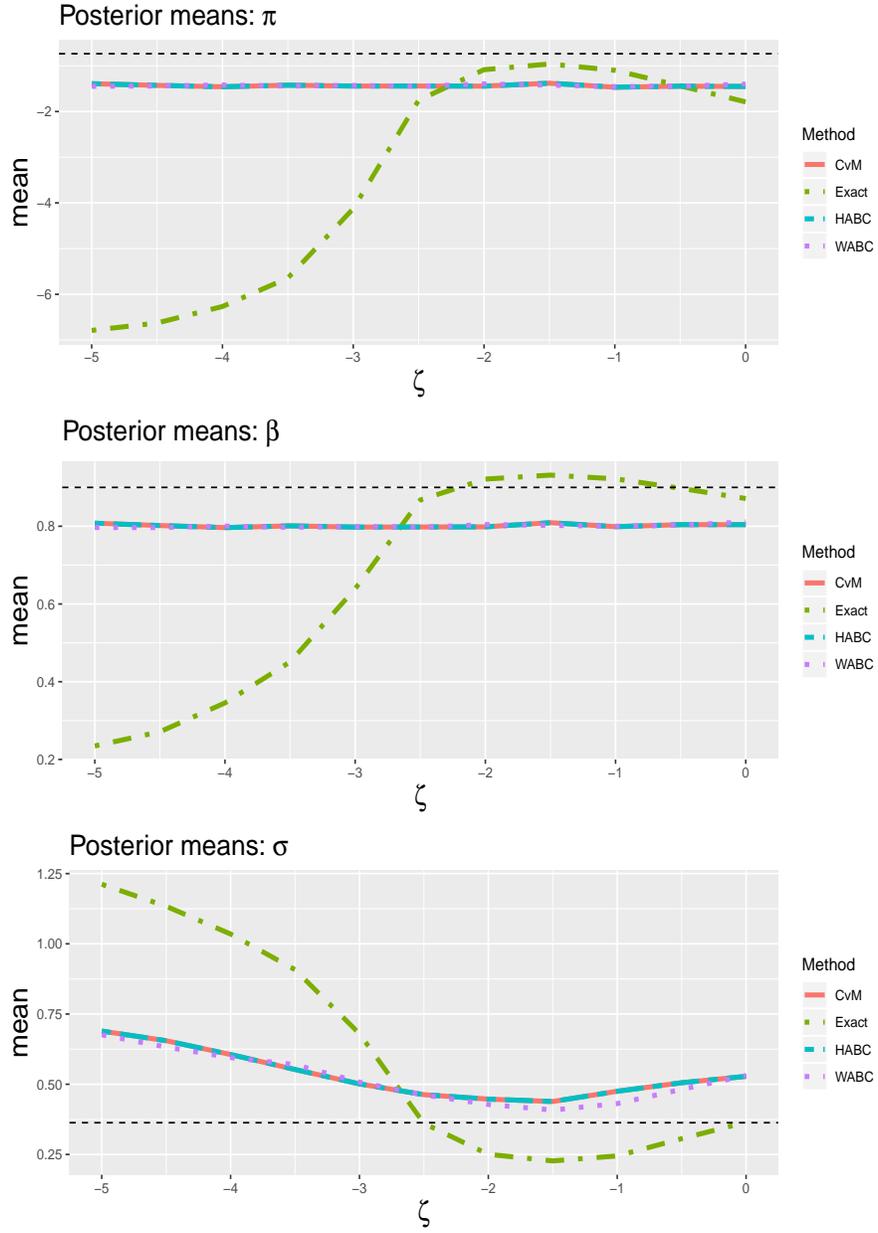}
	\caption{\small Posterior marginal means in the stochastic volatility example. The values used to generate the data are $\theta_\star=(-0.736,0.900,0.363)'$ with $\alpha=0.05$ and across the grid $\zeta\in[-5,0]$. The horizontal dashed lines represent the true values for the simulation. }
	\label{fig:svfig1}
\end{figure}

\begin{figure}[H]
	\centering
	\includegraphics[height=.75\textheight,width=.75\textwidth]{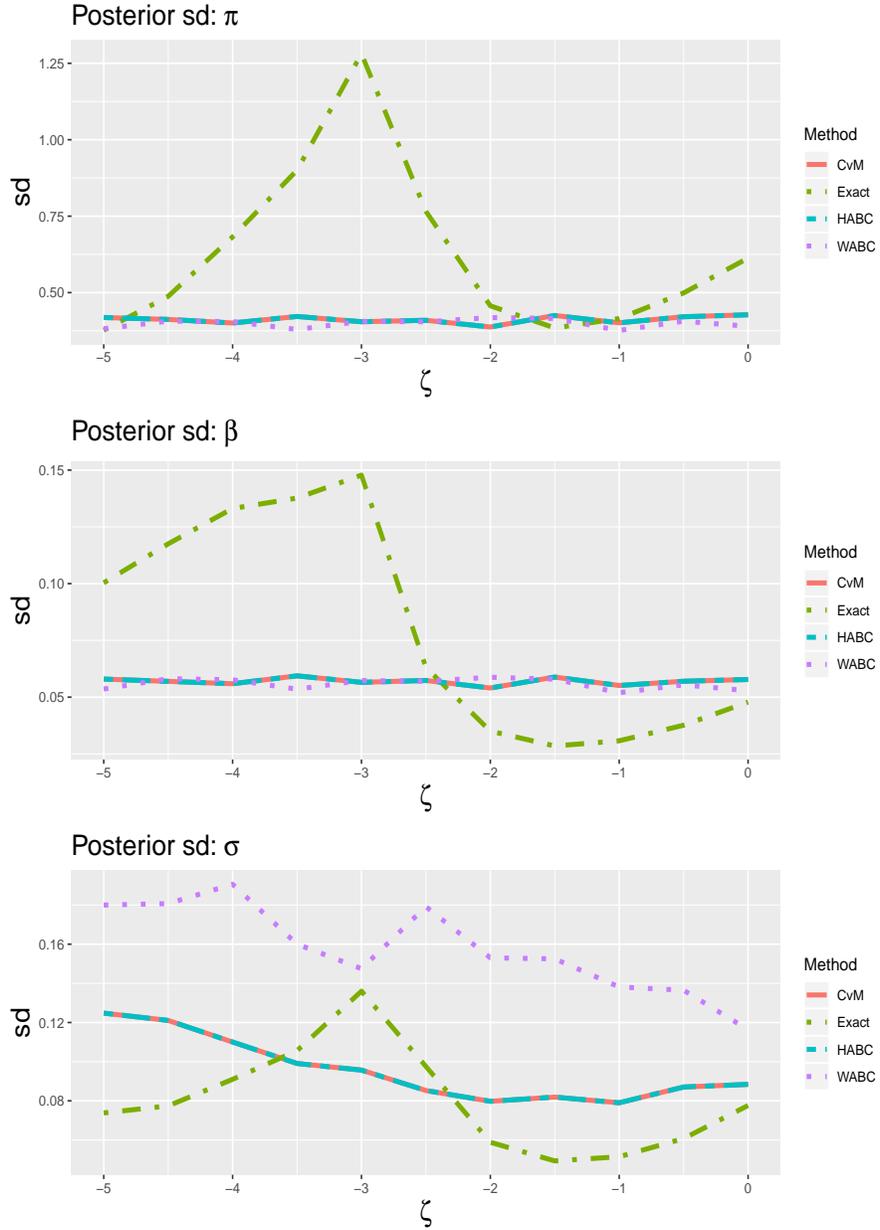}
	\caption{\small Posterior marginal standard deviations in the stochastic volatility example. The values used to generate the data are $\theta_\star=(-0.736,0.900,0.363)'$ with $\alpha=0.05$ and across the grid $\zeta\in[-5,0]$.}
	\label{fig:svfig2}
\end{figure}

%\subsubsection{Empirical Example}
%log-normal SV contaminated by levy errors. 

\section{Discussion}
We have proposed a new approach to Bayesian inference, MD-ABC, based on comparing differences between empirical probability measures calculated from observed and simulated data. This method is particularly useful in models for which the calculation of likelihood functions is intractable. For particular choices of the norm, such as when the norm is chosen to be the Hellinger distance, MD-ABC point estimators are asymptotically efficient and, thus, are equivalent to those obtained from exact Bayesian methods (under the assumption that the model is correctly specified).

While originally motivated by situations where the likelihood is intractable, we have demonstrated that MD-ABC methods are a useful alternative to exact Bayesian methods if one is concerned with deviations from the underlying modeling assumptions. Moreover, we have shown that if the model is correctly specified the differences between MD-ABC and exact inference will generally be minor (depending on the choice of distance used in MD-ABC). As such, MD-ABC represents an approach to inference that is as efficient, or nearly so, as exact methods, but is simultaneously robust to certain types of deviations from the underlying modeling assumptions.

\vspace{0.5cm}

\noindent\textbf{Acknowledgments:} David T. Frazier gratefully acknowledges support by the Australian Research Council through grant DE200101070. We would also like to thank Professor Pierre Jacob for making the code needed to implement Wasserstein ABC publicly available. 

\bibliography{refs_mispecCDF}
\bibliographystyle{apalike}

\begin{appendix}
\section{Additional Examples}\label{app:examples}
In this section, we give a comparison of WABC and MD-ABC across two commonly employed examples in the ABC literature. The results demonstrate that, generally, MD-ABC can deliver inferences that are more efficient than those obtained by WABC. 

\subsection{Example One: g-and-k model  \label{sec:gk}}
In this section, we compare the performance of MD-ABC in the $g$-and-$k$ model, which is a standard test case in ABC-based inference (see, e.g., \citealp{drovandi2011likelihood}, \citealp{FP2012}, and \citealp{Bernton2017}). The $g$-and-$k$ distribution depends on three parameters, has no closed density/distribution, and is generally specified through its quantile function 
$$q \in(0,1) \mapsto a+b\left(1+0.8 \frac{1-\exp (-g z(q)}{1+\exp (-g z(q)}\right)\left(1+z(q)^{2}\right)^{k} z(q),$$where $z(q)$ refers to the $q$-th quantile of the standard normal distribution. The four parameters of the $g$-and-$k$ distributions have specific interpretations. The parameter $a$ represents the location, $b$ the scale, while $g$ and $k$ control the skewness and kurtosis, respectively. Following \citet{drovandi2011likelihood}, we consider the following priors on the parameters
\begin{flalign}\label{eq:gandk}
a\sim \mathcal{U}[0,10],\;b\sim \mathcal{U}[0,10],\;\;g\sim \mathcal{U}[0,10],\;k\sim \mathcal{U}[0,10],
\end{flalign}where $\mathcal{U}[0,1]$ denotes the uniform distribution on $[0,1]$.

The most common approach to ABC-based inference in this model is to choose as the summary statistics, the sample quantiles of the observed data. While a finite collection of quantiles for the data are reasonable summary statistics, \cite{Bernton2017} demonstrates that such choices yield poor inferences relative to WABC-based inference.

In this section, we compare the behavior of WABC and MD-ABC using the Hellinger (H-ABC) and CvM (CvM-ABC) distances. We follow the simulation design in \cite{Bernton2017}. We generate $n=100$ observations from the $g$-and-$k$ model in equation \eqref{eq:gandk} using parameters $a = 3, b = 1, g = 2, k = 0.5$. For all ABC approaches, we use the SMC sampler of \cite{del2012adaptive} with $N=1,024$ particles, and a total of $2\times 10^6$ simulations from the model. For both H-ABC and CvM-ABC,  we consider a simulated sample of length $m=200$, i.e., we take $m=2n$.

Figure \ref{fig:gandk:abc} shows the marginal posterior distributions obtained by the various approaches. The plots show that while their is no single dominant method, CvM-ABC seemingly performs best overall, with the method consistently being centered over the true values ($\theta=(3,1,2,.5)'$) and having less variable posteriors than the other methods. Generally, there is substantial agreement between the  H-ABC and WABC posteriors. Overall, these results demonstrate that these alternative distances are competitive with WABC, and can outperform WABC for certain parameters.

\begin{figure}[H]
	\centering
	\includegraphics[height=.55\textheight,width=1.1\textwidth]{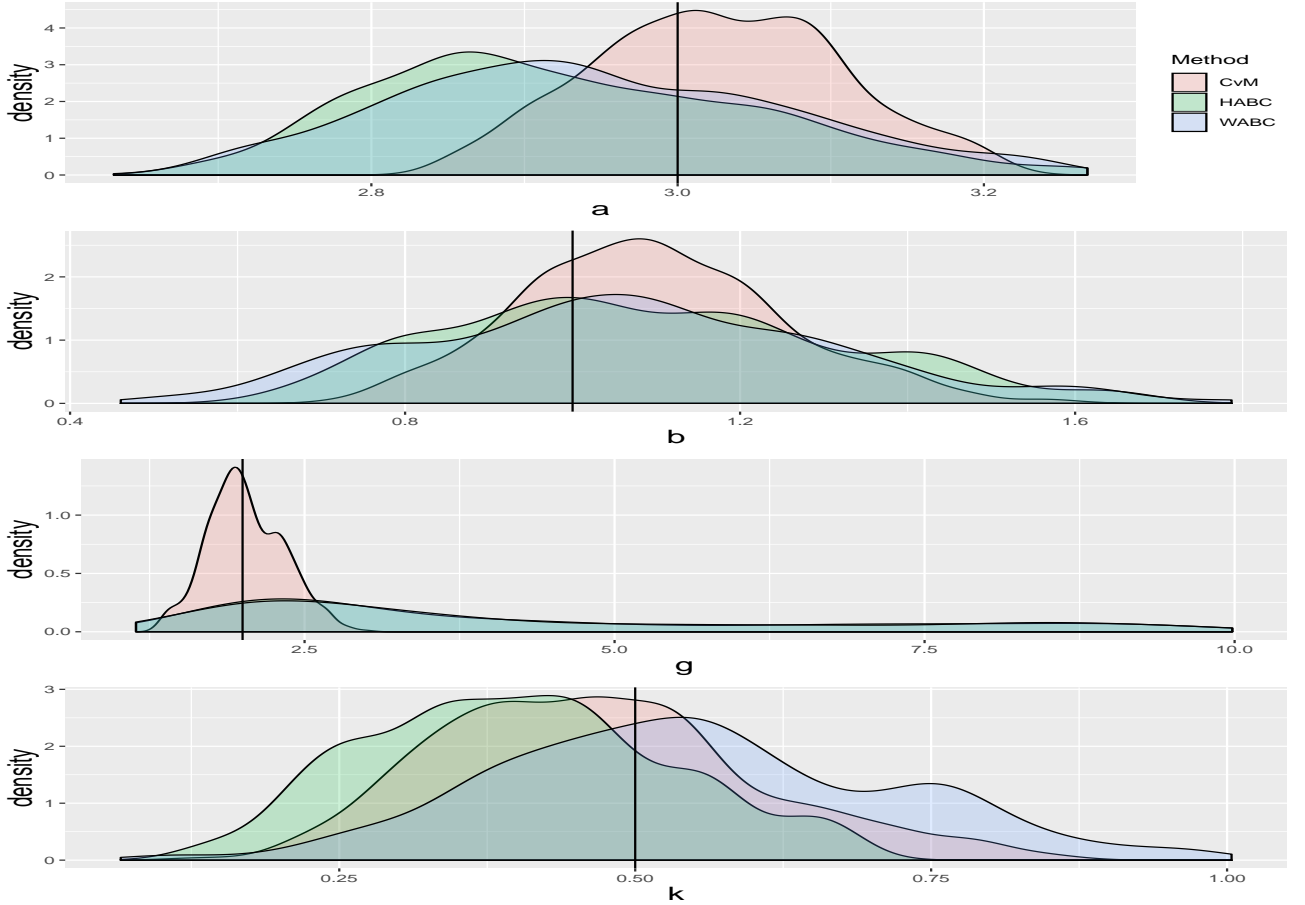}
	\caption{\small Posterior marginals in the $g$-and-$k$ model, with parameter values $\theta_\star=(a,b,g,k)'=(3,1,2,.5)'$. Each ABC method has a budget of $2*10^6$ model simulations. The vertical line indicates the true parameter value.  }
	\label{fig:gandk:abc}
\end{figure}

\subsection{Example Two: Queuing model \label{sec:queue}}

An additional test case that has appeared in the ABC literature is the M/G/1 queuing model (see e.g. \citealp{FP2012}). In the M/G/1 model,
customers arrive at a server with independent inter-arrival times $w_i$, which are assumed to be 
exponentially distributed with rate parameter $\theta_3$. The corresponding service times are 
independent random variables, $u_i$, which are uniformly distributed on
$[\theta_1,\theta_2]$. We observe only the inter-departure times $y_i$, given by
the process
$y_i = u_i + \max\{ 0, \sum_{j=1}^i w_j - \sum_{j=1}^{i-1} y_j\}$. The prior on
$(\theta_1, \theta_2-\theta_1, \theta_3)$ is Uniform on
$[0,10]^2\times[0,1/3]$.

We consider data generated according to the M/G/1 queuing model with $(\theta_1,\theta_2-\theta_1,\theta_3)' = (4,3,0.15)'$ and
$n=50$. We again compare H-ABC, CvM-ABC and WABC. All posteriors are approximated again using the same algorithm in Section \ref{sec:gk}, with a budget of $2\cdot10^6$ model simulations, and with $N=1,024$ parameters again retained to build the posterior approximation. For H-ABC and CvM-ABC, we consider a simulated data size of $m=150$, i.e., $m=3n$. 

The resulting posteriors are compared in Figure \ref{fig:queue:intermediate}. The results again demonstrate that by using alternative norms, we can obtain more efficient inferences than by using the WABC. In this context, across all three parameters, the H-ABC and CvM-ABC posteriors are quite similar, while the WABC posteriors have fatter tails. Generally, speaking all three approaches are well-centred over the true values used to generate the data $\theta_\star=(4,3,0.15)'$. Similar to the $g$-and-$k$ example, the CvM-ABC posterior performs particularly well across all parameters. 

\begin{figure}[H]
	\centering
	\includegraphics[height=.55\textheight,width=1.1\textwidth]{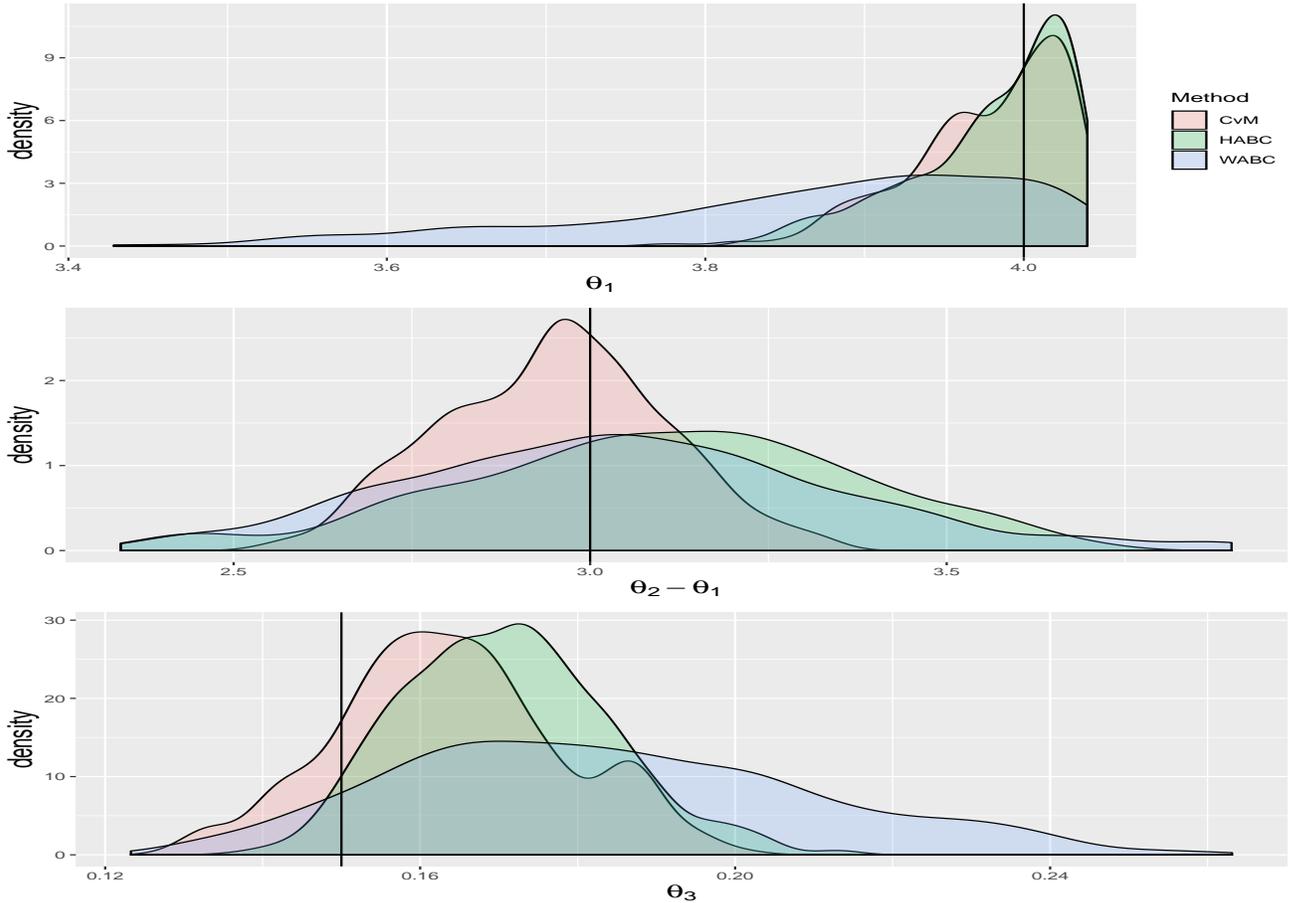}
	\caption{\small Posterior marginals in the M/G/1 queuing model of Section \ref{sec:queue}, with parameter values $\theta_\star=(4,3,0.15)'$. Each ABC method accounts for the constraint that $\theta_1$ has to be less than $\min_{i\in 1:n} y_i$, and has a budget of $2*10^6$ model simulations.}
	\label{fig:queue:intermediate}
\end{figure}
\section{Proofs of main results}
\begin{proof}[Proof of Theorem \ref{normal_thm}]
 Throughout the proof $C$ denotes a generic constant which may vary line to line. To simplify notation, throughout the remainder we drop the $\mathcal{P}$ notation from the norm $\|\cdot\|_\mathcal{P}$. First, define the following empirical processes:
\begin{flalign*}
\mathbb{G}_{n}^{0}=\sqrt{n}(\hat\mu_n-\mu_{\star}),\text{ and }\mathbb{G}_{n}=\sqrt{n}\left(\hat{\mu}_{\theta,n}-\mu_\theta\right).
\end{flalign*}Consider the decomposition:
\begin{equation*}
\begin{split}
\sqrt{n}(\hat{\mu}_{\theta,n}-\hat\mu_n)& =\sqrt{n}(\hat{\mu}_{\theta,n}-\mu_\theta)+\sqrt{n}(\mu_\theta-\mu_{\theta_\star})-(\hat\mu_n-\mu_{\theta_\star})
\\
& =\mathbb{G}_{n}+\sqrt{n} (\mu_\theta-\mu_{\theta_\star})-\mathbb{G}_{n}^{0}.
\end{split}%
\end{equation*}
For all $\theta\in\Theta$, apply the above decomposition to obtain
\begin{equation*}
\begin{split}
\mu^{(n)}_\theta  \left( \|\hat\mu_{\theta,n} - \hat\mu_n\|^2\leq \epsilon_n^2  \right)  &= \mu^{(n)}_\theta \left( \| \mathbb{G}_n - \mathbb{G}_n^0 +\sqrt{n}(\mu_\theta - \mu_\star)\|^2 \leq n \epsilon_n^2  \right) \\
& = \mu^{(n)}_\theta\left( \| \mathbb{G}_n\|^2 +2\langle \mathbb{G}_n , -\mathbb{G}_n^0 +\sqrt{n}(\mu_\theta - \mu_\star) \rangle \leq [n\cdot\epsilon_n^2 - \|\sqrt{n}\left(\mu_\theta-\mu_\star\right) -\mathbb{G}^0_n\|^2 ]\right) \\
&= \mu^{(n)}_\theta\left(\|\mathbb{G}_n\|^2+2\langle\mathbb{G}_n ,\sqrt{n}(\mu_\theta-\mu_\star)\rangle   \leq {[n\cdot\epsilon_n^2 -\|\sqrt{n}\left(\mu_\theta-\mu_\star\right) -\mathbb{G}^0_n\|^2 }{}  +2\langle\mathbb{G}_n, \mathbb{G}_n^0\rangle \right)
\end{split}
\end{equation*}
Now, on $\Omega_n =  \{ \| \mathbb{G}_n^0 \| \leq M_n/2\}$, with $M_n\rightarrow\infty$ and $M_n=o(n^{1/2})$, for $\hat\theta_n$ as in the statement of the theorem,
\begin{flalign}
 \|\sqrt{n}\left(\mu_\theta-\hat\mu_n\right)\|^2&=\|\sqrt{n}\left(\mu_\theta-\mu_\star\right) -\mathbb{G}^0_n\|^2 \nonumber\\&= n M_n^2 +  {n(\theta - \hat\theta_n)^{\intercal} H_\star(\theta - \hat\theta_n)}{ }    + O({n} \|\theta - \hat\theta_n\|^3+ \sqrt{n}\|\theta - \hat\theta_n\| \sqrt{n}M_n), \label{eq:expand1}
\end{flalign}
for a positive-definite matrix $H_\star=\mathbb{E}[ \dot\xi\dot\xi']$, where we note that the first derivative is zero at $\hat\theta_n$.
Likewise, Assumption \ref{ass:6} implies, 
\begin{flalign}
\langle\mathbb{G}_n ,\sqrt{n}(\mu_\theta-\mu_\star)\rangle=\langle\mathbb{G}_n,\sqrt{n}(\theta-\hat\theta_n)^{\intercal}\dot\xi\rangle+o\left(\|\mathbb{G}_n\|\|\theta-\hat\theta_n\|\right)\label{eq:expand2}
\end{flalign}
For $\gamma_n=o(1)$ and $\gamma_n>0$, with $\gamma_n$ the posterior concentration rate deduced by Theorem \ref{thm:one}, we restrict our attention to $\{\theta:\|\theta-\hat\theta_n\|\leq\gamma_n\}$. On the event $\Omega_n = \{ \|\mathbb{G}^0_n \| \leq M_n/2 \}$, where $M_n^2  = o({n})$,  apply the expansions in \eqref{eq:expand1}, \eqref{eq:expand2} and re-arrange terms to obtain,
	\begin{flalign*}
&\mu^{(n)}_\theta \left( \|\hat\mu_{\theta,n}-\hat\mu_n\|^2\leq \epsilon_n^2  \right) 
\\&  \leq  \mu^{(n)}_\theta\left( \| \mathbb{G}_n\|^2 +2\langle\mathbb{G}_n,\sqrt{n}(\theta-\hat\theta_n)^{\intercal}\dot\xi\rangle\leq {n\cdot\epsilon_n^2-(1+C\gamma_n)n(\theta - \hat\theta_n)^{\intercal} H_\star(\theta- \hat\theta_n) }  + C\gamma_n + \sqrt{n}\| \theta - \hat\theta_n\| \sqrt{n}M_n \right) \nonumber\\
&  + \mu^{(n)}_\theta\left( \|  \mathbb{G}_n\| > \gamma_n \sqrt{n}/2  \right) 
\end{flalign*}and 
\begin{flalign*}
&\mu^{(n)}_\theta \left( \|\hat\mu_{\theta,n}-\hat\mu_n\|^2\leq \epsilon_n^2  \right)\\&\geq
\mu^{(n)}_\theta\left(\| \mathbb{G}_n\|^2 +2\langle\mathbb{G}_n,\sqrt{n}(\theta-\hat\theta_n)^{\intercal}\dot\xi\rangle\leq {n\cdot\epsilon_n^2-(1-C\gamma_n)n(\theta - \hat\theta_n)^{\intercal} H_\star(\theta- \hat\theta_n) } - C\gamma_n - \sqrt{n}\| \theta - \hat\theta_n\| \sqrt{n}M_n \right)\nonumber\\
&  - \mu^{(n)}_\theta\left( \|  \mathbb{G}_n\| > \gamma_n \sqrt{n}/2  \right) 
\end{flalign*}

On the event $\Omega _{n}$, which by Assumption \ref{ass:eight}, has probability smaller than $\gamma_n $ by choosing $M$ large
enough, by Assumption \ref{ass:2}, for all $\delta>0$, and $\|\theta-\theta_\star\|\leq \delta$, $\mu_\theta^{(n)}\left[\Vert \mathbb{G}_{n}\Vert >u\right]\lesssim u^\kappa$ for all $0<u\leq \delta \sqrt{n}$.  Therefore, for some positive constant $C>0$ and for any $\epsilon>0$, we can choose $M$ large enough so that, 
\begin{flalign}
\mu^{(n)}_\theta\left( \Vert \mathbb{G}_{n}\Vert \geq M/2\right) &\leq CM^{-\kappa}\leq \epsilon \label{eq:cinq}
\end{flalign}
for all $\Vert {\theta }-{\theta }_{\star}\Vert \leq \gamma_{n}$.
Use \eqref{eq:cinq} to conclude that 
	\begin{flalign}\label{deve:likeli}
\mu^{(n)}_\theta &\left( \|\hat\mu_{\theta,n}-\hat\mu_n\|^2\leq \epsilon_n^2  \right) \pm C\gamma_n\\  \leq&  \mu^{(n)}_\theta\left( \| \mathbb{G}_n\|^2 +2\langle \mathbb{G}_n,\sqrt{n}(\theta-\hat\theta_n)^{\intercal}\dot\xi\rangle\leq {n\cdot\epsilon_n^2-(1+C\gamma_n)n(\theta - \hat\theta_n)^{\intercal} H_\star(\theta- \hat\theta_n)}  + C\gamma_n + \sqrt{n}\| \theta - \hat\theta_n\| \sqrt{n}M_n \right)\nonumber\\\geq
& \mu^{(n)}_\theta\left(\| \mathbb{G}_n\|^2 +2\langle \mathbb{G}_n,\sqrt{n}(\theta-\hat\theta_n)^{\intercal}\dot\xi\rangle\leq {n\cdot\epsilon_n^2-(1-C\gamma_n)n(\theta - \hat\theta_n)^{\intercal} H_\star(\theta- \hat\theta_n) } - C\gamma_n -\sqrt{n} \| \theta - \hat\theta_n\| \sqrt{n}M_n \right)
\end{flalign}
	Define $\theta\mapsto t:=\sqrt{n}(\theta-\hat\theta_n)$ and $h_n:=\epsilon_n\cdot \sqrt{n}$,
	\begin{flalign*}
	{[h^2_n -  (1-C\gamma)t^{\intercal} H_\star t ]}{}    + { n\cdot M_n^2} - \|t\| \sqrt{n}M_n- C\gamma_n  &\geq h^2_n
	-{ t^{\intercal} H_\star t}{} - C\|t\|\gamma[1- o(\|t\|)]	\end{flalign*}
	and 
	\begin{flalign*}
	{[h^2_n -  (1+C\gamma)t^{\intercal} H_\star t/2  ]}    + { n\cdot M_n^2} + \|t\| \sqrt{n}M_n + C\gamma_n\leq h^2_n
	-{ t^{\intercal} H_\star t}{} + C\|t\|\gamma[1+ o(\|t\|)]
	\end{flalign*}
	For $n$ large enough,
	\begin{equation*}
	\begin{split}
	\mu^{(n)}_\theta \left(\|\hat\mu_{\theta,n} - \hat\mu_n\|^2\leq \epsilon_n^2  \right)  &\leq \mu^{(n)}_\theta\left(\| \mathbb{G}_n+\langle t,\dot\xi\rangle\|^2 \leq h^2_n +C\|t\|\gamma[1+o(\|t\|)] \right)  +\gamma
	\\
	  &\geq \mu^{(n)}_\theta\left(\|\mathbb{G}_n+\langle t,\dot\xi\rangle \|^2\leq h^2_n-C\|t\|\gamma[1-o(\|t\|)]\right)-\gamma
	\end{split}
	\end{equation*}
	
Now, consider the change of variables $\theta\mapsto t=\sqrt{n}(\theta-\hat\theta_n)$
and let us analyze, for $R(t)$ denoting either $R_1(t)\asymp \|t\|\gamma[1+o(\|t\|)]$ and $R_2(t)\asymp-\|t\|\gamma[1-o(\|t\|)]$,
\begin{flalign*}
\int \mu^{(n)}_\theta\left(\|\hat\mu_{\theta,n} - \hat\mu_n\|^2\leq \epsilon_n^2  \right)\pi(\theta)\d\theta&\leq n^{-k_\theta/2}\int \mu^{(n)}_{t}\left(\| \mathbb{G}_n+\langle t,\dot\xi\rangle\|^2 \leq h_n +R_1(t) \right)\pi(\hat\theta_n+t/\sqrt{n})\d t\\&\geq n^{-k_\theta/2}
\int \mu^{(n)}_{t}\left(\| \mathbb{G}_n+\langle t,\dot\xi\rangle\|^2 \leq h_n +R_2(t) \right)\pi(\hat\theta_n+t/\sqrt{n})\d   t
\end{flalign*}
 Recall that $h_n=\epsilon_{n}^2{n}$ and define $P_{t}^{\ast }( \| \mathbb{G}_n+\langle t,\dot\xi\rangle\| \leq h_{n})=\mu^{(n)}_{t}( \| \mathbb{G}_n+\langle t,\dot\xi\rangle\| \leq
h_{n})/h^{k_\theta}_{n}$. For $h_{n}=o(1)$, by Assumptions \ref{ass:eight} and \ref{ass:nine}, 
\begin{flalign}
\frac{D_{n}}{h^{k_\theta}_{n}}&:= \int { P_{t}^{*}\left[ \| \mathbb{G}_n+\langle t,\dot\xi\rangle\| \leq h_{n}+R(t)\right]\pi(\hat\theta_n+t/\sqrt{n})}\d t\nonumber\\
&=\pi(\hat\theta_n)\int  
\varphi(t;0,V_{\theta_\star}) \d t+\int \left\{ P_{t}^{*}\left[ \| \mathbb{G}_n+\langle t,\dot\xi\rangle \| \leq h_{n}+R(t)\right]-\varphi(t;0,V_{\theta_\star})\right\}\nonumber\\
&\qquad \times \pi(\hat\theta_n+t/\sqrt{n}) \d t +o_p(1).\label{eq:convs}
\end{flalign}where $R(t)$ denotes either $R_1(t)$ or $R_2(t)$. We now show that $$ \frac{D_{n}}{h^{k_\theta}_{n}}=1+o_p(1).$$

If $h_{n}\rightarrow 0$, then
the result follows if $$\int \left\{ P_{t}^{\ast }[  \| \mathbb{G}_n+\langle t,\dot\xi\rangle\|\leq 
h_{n}+R(t)]-\varphi(t;0,V_{\theta_\star})\right\} \pi(\hat\theta_n+t/\sqrt{n})\d t=o_{p}(1),$$ for which a
sufficient condition is 
\begin{equation}
\int   \left\vert  P_{t}^{\ast }\left[ \| \mathbb{G}_n+\langle t,\dot\xi\rangle\|\leq 
h_{n}+R(t)\right]-\varphi(t;0,V_{\theta_\star}) \right\vert \pi(\hat\theta_{n}+t/\sqrt{n})\d t=o_{p}(1),
\label{show}
\end{equation}
We split the above integrals into three regions and analyze each piece: (i) $\{ \|t\| \leq M\}$, (ii) $\{ \sqrt{n}\gamma > \| t\| > M \}$ and (iii) $\{\gamma\cdot\sqrt{n} \leq \| t\| \}$, where $\gamma $ is arbitrarily small. In what follows we let $x:=\langle t,\dot\xi\rangle$. Before moving on, we note that, by Assumption \ref{ass:6} (iii), the three regions for $\|t\|$ can be translated into regions for $\|x\|$ as follows: for finite constants $0<C_1\leq C_2$, (i) $\{ \|x\| \leq C_2M\}$; (ii) $\{ C_2\sqrt{n}\gamma > \| x\| > C_1M \}$; (ii) $\{\gamma\cdot\sqrt{n}C_1 \leq \| x\| \}$
\medskip

\noindent\textbf{Region (i)}

\medskip

\noindent Over $ \|t\| \leq M$, the following equivalences are satisfied: 
\begin{flalign*}
\sup_{t:  \| t \| \leq
	M}& \mid \pi(\hat\theta_{n}+t/\sqrt{n})-\pi(\theta_\star) \mid =o_{p}(1),\\\sup_{t:  \| t \| \leq
	M}&|R(t)|=o_p(1).
\\\sup_{t:  \| x(t) \| \leq
	C_2M}& \mid P^{*}_{t}[ \| \mathbb{G}_n+x(t) \| \leq h_{n}+R(t)]-\varphi(t;0,V_{\theta_\star}) \mid =o_{p}(1), \end{flalign*}The first equation is satisfied by Assumption \ref{ass:nine} and the consistency of $\hat\theta_n$, which is proven in Lemma \ref{lemma:point_est}. The second term follows
from the definition of $R(t)$ and the fact that $\gamma_n=o(1)$. The second term follows from Assumption \ref{ass:eight} and the dominated convergence theorem. We can now conclude that equation \eqref{show}
is $o_{p}(1)$ over $ \| t \| \leq M$.

\medskip

\noindent\textbf{Region (ii)} 

\medskip

\noindent Over $M\leq \|t\| \leq \gamma \sqrt{n}$ the integral of the
second term can be made arbitrarily small for $M$
large enough, and it suffices to show that 
\begin{equation*}
\int_{M\leq \|t\| \leq \gamma \sqrt{n}} P_{t}^{\ast }[\| \mathbb{G}_{n}+x(t)\| \leq
h_{n}+R(t)]\pi(\hat\theta_n+t/\sqrt{n})\d t
\end{equation*}%
is finite.

When $\|x(t)\| >M$, $\| \mathbb{G}_n+x(t)\| \leq h_{n}+R(t)$ implies that $\| \mathbb{G}_n\| >\|x(t)\| /2\geq C\|t\|/2$ since $%
h_{n}=o(1)$ and for $\gamma^2=O(1/\sqrt{n})$,  $R(t)=o(1)$. Hence, using Assumption \ref{ass:2}, 
\begin{equation*}
P_{t}^{\ast }[\| \mathbb{G}_n+x(t)\| \leq h_{n}+R(t)]\leq  P_{t}^{\ast
}(\| \mathbb{G}_n\| >C\|t\| /2)\leq C{\|t\| ^{-\kappa }},
\end{equation*}%
which in turns implies that, for $\kappa>2$,  
\begin{equation*}
\int_{M\leq \|t\| \leq \gamma \sqrt{n}}P_t^{\ast }[\| \mathbb{G}_n+x(t)\| \leq
h_{n}+R(t)]\pi(\hat\theta_n+t/\sqrt{n}) \d t\leq C\int_{M\leq \|t\| \leq \gamma
	\sqrt{n}}{\|t\| ^{-\kappa}}\d t\leq M^{-\kappa +1}.
\end{equation*} 
\medskip

\noindent\textbf{Region (iii)} 

\medskip
 \noindent Over $\|t\| \geq \gamma \sqrt{n}$ the second term is again
negligible for $\gamma \sqrt{n}$ large. Our focus then becomes 
\begin{equation*}
\int_{ \|t\|\geq \gamma \sqrt{n}}P_{t}^{\ast
}[\| \mathbb{G}_n+x(t)\| \leq h_{n}+R(t)]\pi(\hat\theta_n+t/\sqrt{n})\d t.
\end{equation*}%
By Assumption \ref{ass:2}, for some $\kappa >2$ we bound this term as follows:
\begin{flalign*}
\frac{1}{h^{k_\theta}_{n}}&\int_{\|t\| \geq \gamma\sqrt{n}}\mu^{(n)}_{t}[\| x(t)+\mathbb{G}_n\| \leq
h_{n}+R(t)]\pi(\hat\theta_n+t/\sqrt{n})\d t\\&\leq \frac{1}{h^{k_\theta}_{n}}\int_{\|t\| \geq
	\gamma \sqrt{n}}\frac{  c(\hat\theta_n+t/\sqrt{n})
}{[1+\|t\| -h_{n}-R(t)]^{\kappa}}\pi(\hat\theta_n+t/\sqrt{n})\d t\\&\lesssim\frac{{n}^{k_\theta}}{h^{k_\theta}_{n}}\int_{\| \theta-\hat\theta_n\| \geq\gamma}\frac{{c(\theta)}
	\| \theta-\hat\theta_n\|  }{\{1+\sqrt{n}\| \theta-\hat\theta_n\| -h_{n}-R(\theta-\hat\theta_n)\}^{\kappa}}\pi(\theta)\d\theta
\end{flalign*}Since $\hat\theta_n=\theta_{\star}+O_{P}(1/\sqrt{n})$, by Lemma \ref{lemma:point_est} we have, for $n$
large,
\begin{flalign*}
\frac{{n}^{k_\theta}}{h^{k_\theta}_{n}}\int_{\| \theta-\theta_{\star}\| \geq\gamma/2}\frac{{c(\theta)} \pi(\theta)}{[1+\sqrt{n}\gamma\{1-O(\gamma)\}-h_{n}]^{\kappa}}\d\theta\lesssim  \frac{1}{\epsilon^{k_\theta}_{n}}\left\{\int {c(\theta)} \pi(\theta)\d\theta\right\}O(n^{k_\theta/2-\kappa})\lesssim O(n^{k_\theta/2-\kappa}\epsilon^{k_\theta}_{n})=o(1),
\end{flalign*}for $\kappa>k_\theta$, 
where Assumption \ref{ass:nine} ensure $\int {c(\theta)} \pi(\theta)\d\theta=O(1)$. Conclude that $D_n/h_n^{k_\theta}=1+o_p(1)$. 

All we have manipulated is the magnitude of $|R(t)|$, and so 
$$
D_n/h_n^{k_\theta}\asymp\frac{1}{h_n^{k_\theta}}\int \mu^{(n)}_t\left(\|\hat\mu_{\hat\theta_n+t/\sqrt{n},n} - \hat\mu_n\|^2\leq \epsilon_n^2  \right)\pi(\theta)\d\theta\asymp D_n/h_n^{k_\theta},
$$Conclude that 
$$
\frac{1}{h_n^{k_\theta}}\int \mu^{(n)}_t\left(\|\hat\mu_{\hat\theta_n+t/\sqrt{n},n} - \hat\mu_n\|^2\leq \epsilon_n^2  \right)\pi(\theta)\d\theta=1+o_p(1).
$$

To deduce the stated result, we use the above and control the behavior of
\begin{equation*}
\begin{split}
\E_{\Pi _{\epsilon }}\left[ \mathds{1}_{\sqrt{n}(\theta-\hat\theta_n)\in A}\right] & =\int \mathds{1}_{\{t\in A\}}\d\Pi _{\epsilon }\left( {%
	\theta } \mid y_{1:n}\right) \\
& =\int \mathds{1}_{\sqrt{n}(\theta-\hat\theta_n)\in A}1\!\mathrm{l}_{\Vert {\theta }-{\theta}_{\star}\Vert \leq \gamma _{n}}\d\Pi _{\epsilon }\left( {\theta } \mid {y_{1:n} }\right) +o_{P}(1) \\
& =\frac{\int_{\Vert {\theta }-{\theta }_{\star}\Vert \leq \gamma _{n}}\pi({%
		\theta })\mathds{1}_{\sqrt{n}(\theta-\hat\theta_n)\in A}\mu^{(n)}_\theta\left( \Vert \hat\mu_{\theta,n}-\hat\mu_n\Vert \leq \epsilon _{n}\right) \d{\theta }}{\int_{\Vert {%
			\theta }-{\theta }_{\star}\Vert \leq \gamma _{n}}\pi({\theta })\mu^{(n)}_\theta\left (\Vert \hat\mu_{\theta,n}-\hat\mu_n\Vert \leq \epsilon _{n}\right) \d{%
		\theta }}+o_{p}(1)\\&=\frac{N_n}{D_n}+o_p(1).
\end{split}%
\end{equation*}%
where the second equality uses the posterior concentration of $\Vert {\theta 
}-{\theta }_{0}\Vert $ at the rate $\gamma _{T}{\gg }
1/\sqrt{n}$. We have already demonstrated that $D_n/h_n^{k_\theta}=1+o_p(1)$. Similar arguments to the above yield 
\begin{flalign*}
\frac{N_n}{h_n^{k_\theta} }&=\int_{}\mathds{1}_{\sqrt{n}(\theta-\hat\theta_n)\in A}P^{\ast}_t\left[\|\mathbb{G}_n+x(t)\|\leq h_n\right]\d t+o(1)\\&=\int_{}\mathds{1}_{t\in A}P^{\ast}_{{t}}\left[\{\mathbb{G}_n+x(t)\}\in B_n\leq h_n\right]\d t+o(1)\\&=\int_{A}\varphi(t;0,V_{\theta_\star})\d t+o(1), 
\end{flalign*}where the last line follows from applying Assumption \ref{ass:nine} and the DCT. Together, these two equivalences yield the result. 
\end{proof}

\begin{proof}[Proof of Theorem \ref{mean_thm}] The proof follows along the same lines as that of Theorem \ref{normal_thm} and we only highlight the main differences. 

To deduce the stated result, consider the change of variables $t:=\sqrt{n}(\theta-\hat\theta_n)$ and obtain
\begin{equation*}
\begin{split}
\E_{\Pi _{\epsilon }}\left[ \theta\right] & =\int \theta\d\Pi _{\epsilon }\left({%
	\theta } \mid y_{1:n}\right) \\
& =\int \left(\hat\theta_n+t/\sqrt{n}\right)\d\Pi _{\epsilon }\left( {\hat\theta_n+t/\sqrt{n} } \mid {y_{1:n} }\right) +o_{p}(1)
\end{split}%
\end{equation*}%
which, following arguments in the proof of Theorem \ref{normal_thm}, yields
\begin{flalign*}
\sqrt{n}\left(\bar{\theta}-\hat\theta_n\right)=\frac{\int_{}t\pi(\hat\theta_n+t/\sqrt{n})P^{\ast}_{{t}}\left[\|\mathbb{G}_n+x(t)\|\leq h_n+R(t)\right] \d{t }}{\int_{}\pi({\hat\theta_n+t/\sqrt{n} })P^{\ast}_{{t}}\left[\|\mathbb{G}_n+x(t)\|\leq h_n+R(t)\right] \d{%
		t }}=\frac{N_n/h_n^{k_\theta}}{D_n/h_n^{k_\theta}}
\end{flalign*}

By the proof of Theorem \ref{normal_thm}, $D_n/h_n^{k_\theta}=1+o_p(1)$. Now, following the same arguments as in the Theorem \ref{normal_thm}, the term $N_n$ can be expressed as
$$
\frac{N_n}{h_n^{k_\theta}}=\int t[P_{t}^{\ast }[  \| \mathbb{G}_n+\langle t,\dot\xi\rangle\|\leq 
h_{n}+R(t)]-\varphi(t;0,V_{\theta_\star})]\pi(\hat\theta_n+t/\sqrt{n})\d t+\pi(\hat\theta_n)\int t \varphi(t;0,V_{\theta_\star})\d t+o_p(1).
$$
The stated result will follow if
\begin{equation}
\int   \|t\|\left\vert  P_{t}^{\ast }[  \| \mathbb{G}_n+\langle t,\dot\xi\rangle\|\leq 
h_{n}+R(t)]-\varphi(t;0,V_{\theta_\star})\right\vert \pi(\hat\theta_{n}+t/\sqrt{n})\d t=o_{p}(1),
\label{show1}
\end{equation}As in the proof of Theorem \ref{normal_thm}, the result can be proven by splitting the integral into three regions and showing that each piece is negligible. Given this, For the sake of brevity, we only provide a sketch that highlights the important differences.

\noindent For \textbf{Region (i)}, $\|t\|\leq M$,  we proceed exactly as in Theorem \ref{normal_thm}. 

\noindent For \textbf{Region (ii)},  Over $M\leq \|t\| \leq \gamma \sqrt{n}$, it suffices to show that 
\begin{equation*}
\int_{M\leq \|t\| \leq \gamma \sqrt{n}} \|t\|P_{t}^{\ast }[\| \mathbb{G}_{n}+x(t)\| \leq
h_{n}+R(t)]\pi(\hat\theta_n+t/\sqrt{n})\d t
\end{equation*}can be made finite for $M$ large.

When $\|t\|>M$, $\|x\|>CM$, for some $C>0$. Define $M':=CM$ and note that for $\|x(t)\| >M'$, $\| \mathbb{G}_n+x(t)\| \leq h_{n}+R(t)$ implies that $\| \mathbb{G}_n\| >\|x(t)\| /2$ since $%
h_{n}=o(1)$ and for $\gamma^2=O(1/\sqrt{n})$ $R(t)=o(1)$. Hence, using Assumption \ref{ass:6} and the inequality in \eqref{eq:cinq}, 
\begin{equation*}
\|t\|P_{t}^{\ast }\{\| \mathbb{G}_n+x(t)\| \leq h_{n}+R(t)\}\leq  \|t\|P_{t}^{\ast
}(\| \mathbb{G}_n\| >C\|t\| /2)\leq {C }/{\|t\| ^{\kappa-1 }},
\end{equation*}%
which in turns implies that 
\begin{equation*}
\int_{M\leq \|t\| \leq \gamma\sqrt{n}}P_t^{\ast }\{\| \mathbb{G}_n+x(t)\| \leq
h_{n}+R(t)\}\pi(\hat\theta_n+t/\sqrt{n}) \d t\leq C\int_{M\leq \|t\| \leq \gamma
	\sqrt{n}}\frac{1}{\|t\| ^{\kappa -1}}\d t\leq M^{-\kappa +2}.
\end{equation*} which can be made arbitrarily small for $M$ large enough and $\kappa>2$.

\noindent For \textbf{Region (iii)}, over $\|t\|\geq \gamma\sqrt{n}$, we proceed similar to Theorem \ref{normal_thm} but bound the remaining term as follows:
for some $\kappa >2$
\begin{flalign*}
\frac{1}{h_{n}^{k_\theta}}&\int_{\|t\| \geq \gamma \sqrt{n}}\|t\|\mu^{(n)}_{t}[\| x(t)+\mathbb{G}_n\| \leq
h_{n}+R(t)]\pi(\hat\theta_n+(t)/\sqrt{n})\d t\\&\leq \frac{1}{h^{k_\theta}_{n}}\int_{\|t\| \geq
	\gamma \sqrt{n}}\frac{  \|t\|c(\hat\theta_n+t/\sqrt{n})
}{[1+\|t\| -h_{n}-R(t)]^{\kappa}}\pi(\hat\theta_n+(t)/\sqrt{n})\d t\\&\lesssim\frac{n^{k_\theta}}{h^{k_\theta}_{n}}\int_{\| \theta-\hat\theta_n\| \geq\gamma}\frac{{c(\theta)}
	\| \theta-\hat\theta_n\|  }{\{1+\sqrt{n}\| \theta-\hat\theta_n\| -h_{n}-R(\theta-\hat\theta_n)\}^{\kappa}}\pi(\theta)\d\theta
\end{flalign*}The proof now follows exactly the same line as in the remainder of Theorem \ref{normal_thm}.

Putting the results together across the three regions demonstrates that $N_n/h_n^{k_\theta}=o_p(1)$.

\end{proof}

\begin{proof}[Proof of Theorem \ref{thm:bias}]

\noindent\textbf{Part (i).} Consider $r>0$ and $t\in\Theta$ arbitrary. Recall that the likelihood ratio for the simple perturbations satisfies
$$
\frac{\d\mu^{(n)}_{\theta_n,\zeta_n}}{\d\mu^{(n)}_{\theta_\star}}=1+\frac{t^{\intercal}\zeta_n}{\sqrt{n}},
$$ for $\zeta_n$ such that
$$
\lim_{n\rightarrow\infty}\int(\zeta_n-\zeta)^2\d\mu_{\star}=0,\text{ and }\zeta:=-H_\star^{-1}\dot\xi_{},
$$
and for $\dot\xi$ and $H_\star$ as in Assumption \ref{ass:6}. Under this construction, $\mu_{\theta_\star,0}=\mu_\star$, and since $t^{\intercal}\zeta_n/\sqrt{n}=O(n^{-1/2})$, the likelihood ratio ${\d\mu^{(n)}_{\theta_n,\zeta_n}}/{\d\mu^{(n)}_{\theta_\star}}$ is well-defined for $n$ large enough. By assumption, the functional $T_a$ is Fisher consistent for this class of simple perturbations. 

We consider the two neighborhoods separate.

\noindent\textbf{Hellinger neighborhood:} For arbitrary $r>0$ and $n$ large enough, choose a $\mu^{(n)}_{\theta_n,\zeta_n}\in\mathcal{B}_\has(\mu_\star,r/\sqrt{n})$ and consider the Hellinger distance between $\mu^{(n)}_{\theta_n,\zeta_n}$ and $\mu^{(n)}_{\theta_\star}$:
$$
\has^2\{\mu^{}_{\theta_n,\zeta_n},\mu_\star\}=\left\|\sqrt{\mu^{}_{\theta_n,\zeta_n}}-\sqrt{\mu^{}_{\star}} \right\|^2_2=\int\left(\sqrt{(1+t^{\intercal}\zeta_n/\sqrt{n})\d\mu_\star}-\sqrt{\d\mu_\star}\right)^2\d\lambda
$$Multiply by $n$ and expand around $t=0$, to obtain
\begin{flalign}\label{eq:hell_cons}
n\has^2\{\mu^{}_{\theta_n,\zeta_n},\mu_\star\}&=\frac{1}{8}t^{\intercal}\int\zeta_n\zeta_n^{\intercal}\d\mu_\star t+O(n\|t/\sqrt{n}\|^3)\\&\rightarrow\frac{1}{8}t^{\intercal}H_\star^{}t.
\end{flalign}
Using this, a lower bound of the maximum bias for $T_a$ can be obtained following the arguments in Theorem 5.3.5 of \cite{rieder2012robust}: 
\begin{flalign*}
\lim_{n\rightarrow\infty}\sup_{Q\in\mathcal{B}_\has\left(\mu_\star,r/\sqrt{n}\right)}&n\cdot|\tau\circ T_a(Q)-\tau(\theta_0)|^2\geq\\
\lim_{n\rightarrow\infty}\sup_{t\in\Theta:\mu_{\theta_n,\zeta_n}\in\mathcal{B}_\has\left(\mu_\star,r/\sqrt{n}\right)}&n\cdot|\tau\circ T_a(\mu_{\theta_n,\zeta_n})-\tau(\theta_0)|^2,
\end{flalign*}however, by Fisher consistency, and the differentiability of $\tau(\theta)$, $\sqrt{n}\left\{\tau\circ T_a(\mu_{\theta_n,\zeta_n})-\tau(\theta_\star)\right\}=\left[\d\tau(\theta_\star)/\d\theta\right]t+o(1)$, so that 
\begin{flalign*}
\lim_{n\rightarrow\infty}\sup_{t\in\Theta:\mu_{\theta_n,\zeta_n}\in\mathcal{B}_\has\left(\mu_\star,r/\sqrt{n}\right)}n\cdot|\tau\circ T_a(\mu_{\theta_n,\zeta_n})-\tau(\theta_0)|^2&\geq \lim_{n\rightarrow\infty}\sup_{t\in\Theta:\mu_{\theta_n,\zeta_n}\in\mathcal{B}_\has\left(\mu_\star,r/\sqrt{n}\right)}|\left[\d\tau(\theta_\star)/\d\theta\right]t|^2\\&\geq\lim_{n \rightarrow\infty}\sup_{t\in\Theta:t^{\intercal}H_\star^{}t\leq 8r^2}\left|\frac{\d}{\d\theta}\tau(\theta_0)t\right|^2\\&=8r^2\left[\frac{\d}{\d\theta}\tau(\theta_0)\right]H_\star^{-1}\left[\frac{\d}{\d\theta}\tau(\theta_0)\right]^{\intercal}
\end{flalign*}where the second inequality follows from the convergence in \eqref{eq:hell_cons}.\\

\noindent\textbf{CvM neighborhood:} For arbitrary $r>0$ and $n$ large enough, again choose a $\mu^{(n)}_{\theta_n,\zeta_n}\in\mathcal{B}_\lambda(\mu_\star,r/\sqrt{n})$. Calculate the CvM distance between $\mu^{(n)}_{\theta_n,\zeta_n}$ and $\mu^{(n)}_{\theta_\star}$: for $\lambda$ some $\sigma$-finite measure such that $$
\int \mu_\star(y)\left\{1-\mu_\star(y)\right\}\d\lambda(y)<\infty,
$$we have
\begin{flalign*}
n\mathcal{C}(\mu^{}_{\theta_n,\zeta_n},\mu_\star)&=\left\|{\mu^{}_{\theta_n,\zeta_n}}-{\mu^{}_{\star}} \right\|_{2,1}=n\int\left|\mu^{}_{\theta_n,\zeta_n}(y)-\mu_\star(y)\right|^2\lambda(\d y)\\
&=\int \left|\int\mathds{1}\left[x\leq y\right]t^{\intercal}\zeta_n(x)\mu_\star(\d x)\right|^2\lambda(\d y)\\&=
\int \left|\int\left\{\mathds{1}\left[x\leq y\right]-\mu_\star(\d y)\right\}t^{\intercal}\zeta_n(x)\mu_\star(\d x)\right|^2\lambda(\d y)
\end{flalign*}Define $\Delta_n(y):=\int \left\{\mathds{1}\left[x\leq y\right]-\mu_\star(\d y)\right\}\zeta_n(x)\mu_\star(\d x)$ and note that, $\Delta(y):= \int \left\{\mathds{1}\left[x\leq y\right]-\mu_\star(\d y)\right\}\zeta(x)\mu_\star(\d x)$. By the definition of the simple perturbations, 
\begin{flalign*}
\int \left|\Delta_n(y)-\Delta(y)\right|^2\lambda(\d y)&\leq \E\left|\Delta_n-\Delta\right|^2\int\mu_\star(y)\{1-\mu_\star(y)\}\d\lambda(y)=o(1).
\end{flalign*}Conclude that
\begin{flalign*}
n\mathcal{C}(\mu^{}_{\theta_n,\zeta_n},\mu_\star)=&\int |t^{\intercal}\Delta_n(y)|^2\d\lambda(y)\rightarrow t^{\intercal}{H}_\star t,
\end{flalign*}where 
$$
H_\star=\mathbb{E}[\Delta\Delta^{\intercal}]:=\int\int\left\{\mu_\star\left(z\wedge y \right)-\mu_\star(y)\mu_\star(z)\right\}\zeta(y)\zeta(z)^{\intercal}\lambda(\d y)\lambda(\d z).
$$
Using this, the result follows exactly the same arguments as for the Hellinger neighborhood, and is omitted. 
%\begin{flalign*}
%\lim_{n\rightarrow\infty}\sup_{Q\in\mathcal{B}_\has\left(\mu_\star,r/\sqrt{n}\right)}&n\cdot|\tau\circ T_a(Q)-\tau(\theta_0)|^2\geq\\
%\lim_{n\rightarrow\infty}\sup_{t\in\Theta:\mu_{\theta_n,\zeta_n}\in\mathcal{B}_\has\left(\mu_\star,r/\sqrt{n}\right)}&n\cdot|\tau\circ T_a(\mu_{\theta_n,\zeta_n})-\tau(\theta_0)|^2
%\end{flalign*}but by Fisher consistency, and the differentiability of $\tau(\theta)$, $\sqrt{n}\left\{\tau\circ %T_a(\mu_{\theta_n,\zeta_n})-\tau(\theta_\star)\right\}=\left[\d\tau(\theta_\star)/\d\theta\right]t+o(1)$, so that 
%\begin{flalign*}
%\lim_{n\rightarrow\infty}\sup_{t\in\Theta:\mu_{\theta_n,\zeta_n}\in\mathcal{B}_\has\left(\mu_\star,r/\sqrt{n}\right)}n\cdot|\tau\circ T_a(\mu_{\theta_n,\zeta_n})-\tau(\theta_0)|^2&\geq \lim_{n\rightarrow\infty}\sup_{t\in\Theta:\mu_{\theta_n,\zeta_n}\in\mathcal{B}_\has\left(\mu_\star,r/\sqrt{n}\right)}|\left[\d\tau(\theta_\star)/\d\theta\right]t|^2\\&\geq\lim_{n \rightarrow\infty}\sup_{t\in\Theta:t^{\intercal}H_\star^{}t\leq 8r^2}\left|\frac{\d}{\d\theta}\tau(\theta_0)t\right|^2\\&=8r^2\left[\frac{\d}{\d\theta}\tau(\theta_0)\right]H_\star^{-1}\left[\frac{\d}{\d\theta}\tau(\theta_0)\right]^{\intercal}
%\end{flalign*}where the second inequality follows from the convergence in \eqref{eq:hell_cons}

\noindent\textbf{Part (ii)} Again, we break the proof up into the different neighborhoods. \\

\noindent\textbf{Hellinger neighborhood:} Define the functional $T_h(Q):=\argmin_{\theta\in\Theta}\has\{\mu_\theta,Q\}$. By Theorem 6.3.4 of \cite{rieder2012robust}, part (a), for some $\delta_n=o(1)$, $T_h$ is Hellinger differentiable for all $\theta\in\{\theta:\|\theta-\theta_\star\|\leq \delta_n\}$ and has influence curve $\zeta_\theta:=H_{\theta}^{-1}\dot\xi_\theta$, where, in this case, $H_\theta$ is the Fisher information at $\theta$ and $\dot\xi_\theta$ the score (at $\theta$). Consequently, $T_h$ is Fisher consistent for any $t\in\bar{\Theta}$, with $\bar{\Theta}\subseteq\Theta$ and compact. Moreover, for $\zeta_\theta$ as above, and for all sequences $Q_n\in\mathcal{B}_\has(\mu_{\theta_\star},r/\sqrt{n})$, by Theorem 6.4.5 of \cite{rieder2012robust}, 
$$
\sqrt{n}\{T_h(Q_n)-\theta_\star\}=2\sqrt{n}\int \zeta_{\theta_\star}\d Q_n+o(1). 
$$From above expansion, and the bounded differentiability of $\tau(\theta)$, for $\d\tau_\star:= \frac{\d\tau(\theta_\star)}{\d\theta}$,
\begin{flalign*}
\frac{1}{2}\left\{\tau\circ T_h(Q_n)-\tau(\theta_\star)\right\}^2&=\left\{\d\tau_\star\int \zeta_{\theta_\star}\d Q_n^{1/2}\left(\d Q_n^{1/2}-\d \mu_{\theta_\star}^{1/2}\right)-\d\tau_\star\int \zeta_{\theta_\star}\d\mu_{\theta_{\star}}^{1/2}\left(\d Q_n^{1/2}-\d \mu_{\theta_\star}^{1/2}\right)\right\}^2+o(1)
\\&\leq\left|\d\tau_\star\int \zeta_{\theta_\star}\d Q_n^{1/2}\left(\d Q_n^{1/2}-\d \mu_{\theta_\star}^{1/2}\right)\right|^2 +\left|\d\tau_\star\int \zeta_{\theta_\star}\d\mu_{\theta_{\star}}^{1/2}\left(\d Q_n^{1/2}-\d \mu_{\theta_\star}^{1/2}\right)\right|^2\\&+2\left|\d\tau_\star\int \zeta_{\theta_\star}\d Q_n^{1/2}\left(\d Q_n^{1/2}-\d \mu_{\theta_\star}^{1/2}\right)\right|\left|\d\tau_\star\int \zeta_{\theta_\star}\d\mu_{\theta_{\star}}^{1/2}\left(\d Q_n^{1/2}-\d \mu_{\theta_\star}^{1/2}\right)\right|+o(1)\\&=L_1+L_2+2L_3+o(1).
\end{flalign*}

Consider the $L_1$ term and obtain
\begin{flalign*}
L_1\leq \left|\d\tau_\star\int \zeta_{\theta_\star}\zeta_{\theta_\star}^{\intercal}\d Q_n\d\tau_\star^{\intercal}\right|\left|\int \{\d Q_n^{1/2}-\d\mu_{\theta_\star}^{1/2}\}^2\right|^{}\leq \left|\d\tau_\star\int \zeta_{\theta_\star}\zeta_{\theta_\star}^{\intercal}\d Q_n\d\tau_\star^{\intercal}\right|\frac{r^2}{{n}}\leq B^*\frac{r^2}{{n}}+o(n^{-1}),
\end{flalign*}where the first inequality follows from Cauchy-Schwartz, the second inequality from the fact that $Q_n\in\mathcal{B}_\has\left(\mu_\star,r/\sqrt{n}\right)$, and the third inequality from the DCT. A similar argument yields equivalent bounds for $L_2$ and $L_3$ so that we obtain 
$$
\limsup_{n \rightarrow \infty}n\left\{\tau\circ T_h(Q_n)-\tau(\theta_\star)\right\}^2\leq 8 B^*\frac{r^2}{{n}}.
$$

To show the result for $\bar{T}(Q_n)$, recall that, by Theorem 6.4.5 of \cite{rieder2012robust}, 
$$
\sqrt{n}\{T_h(Q_n)-\theta_\star\}=2\sqrt{n}\int \zeta_{\theta_\star}\d Q_n+o(1).
$$ and by Theorem \ref{mean_thm} 
$$
\sqrt{n}\|\left\{\bar{T}(\hat{\mu}_n)-\theta_\star\right\}-\left\{T_h(\hat{\mu}_n)-\theta_\star\right\}\|=o_p(1).
$$Taken jointly, we have that  
$$
\sqrt{n}\left\{\bar{T}(Q_n)-\theta_\star\right\}-2\sqrt{n}\int \zeta_{\theta_\star}\d Q_n=o(1).
$$for all $Q_n\in\mathcal{B}_\has\left(\mu_\star,r/\sqrt{n}\right)$. Replacing the above derivations with this expansion does not alter the conclusions.

\noindent\textbf{CvM neighborhood:} The proof follows similarly to the case of the Hellinger neighborhood, and so we only sketch the main arguments.

Define the functional $T_c(Q):=\argmin_{\theta\in\Theta}\mathcal{C}\{\mu_\theta,Q\}$. By Theorem 6.3.4 of \cite{rieder2012robust}, part (b), for some $\delta_n=o(1)$, $T_c$ is CvM differentiable for all $\theta\in\{\theta:\|\theta-\theta_\star\|\leq \delta_n\}$ and has influence curve $\zeta_\theta:=H_{\theta}^{-1}\Delta_\theta$, where $\Delta_\theta(y):=\int\{\mathds{1}[x\leq y]-\mu_\theta(\d x)\}\dot\xi_\theta(x)\mu_\theta(\d x)$ and $H_{\theta}:=\E\left[\Delta_\theta\Delta_\theta^{\intercal}\right]$ (positive-definite for all $\theta\in\{\theta:\|\theta-\theta_\star\|\leq\delta_n\}$ by Assumption \ref{ass:6}). For $\zeta_\theta$ as above, and for all sequences $Q_n\in\mathcal{B}_\lambda(\mu_{\theta_\star},r/\sqrt{n})$, by Proposition 6.4.7 of \cite{rieder2012robust}, 
$$
\sqrt{n}\{T_c(Q_n)-\theta_\star\}=\sqrt{n}\int \left\{Q_n(y)-\mu_\star(y)\right\}\zeta_{\theta_\star}(y)\lambda(\d y)+o(1). 
$$From the above expansion, and the bounded differentiability of $\tau(\theta)$, for $\d\tau_\star:={\d\tau(\theta_\star)}/{\d\theta}$,
\begin{flalign*}
\frac{1}{2}\left|\tau\circ T_h(Q_n)-\tau(\theta_\star)\right|^2&=\left|\d\tau_\star\int \left\{Q_n(y)-\mu_\star(y)\right\}\zeta_{\theta_\star}(y)\lambda(\d y)\right|^2+o(1)
\\&\leq\left|\d\tau_\star\int \zeta_{\theta_\star}(y)\zeta_{\theta_\star}(y)^{\intercal}\lambda(\d y)\d\tau_\star^{\intercal}\right| \left|\int\{Q_n(y)-\mu_\star(y)\}^2\lambda(\d y)\right|\\&=\d\tau_\star H_\star^{-1}\d\tau_\star^{\intercal}\left|\int|Q_n(y)-\mu_\star(y)|^2\lambda(\d y)\right|+o(1)\\&=\frac{r^2}{n}\d\tau_\star H_\star^{-1}\d\tau_\star^{\intercal}.
\end{flalign*}The first inequality follows by Cauchy-Schwartz, the second from the definition of $\zeta_{\theta_\star}$, and the third from the fact that $Q_n\in\mathcal{B}_{\mathcal{C}}\left(\mu_\star,r/\sqrt{n}\right)$. The remainder of the proof follows the same as in the Hellinger case.

\end{proof}

\subsection{Additional Lemmas}
\begin{lemma}\label{lemma:point_est}Under Assumptions \ref{ass:1}, \ref{ass:2}, \ref{ass:5} and Assumptions \ref{ass:5a}-\ref{ass:seven}, for $\hat{\theta}_n$ satisfying $\|\mu_{\hat\theta_n}-\hat\mu_n\|_{\mathcal{P}}\leq \inf_{\theta\in\Theta}\|\mu_\theta-\hat\mu_n\|_{\mathcal{P}}+o_{p}(1/\sqrt{n})$,
	$$
	\sqrt{n}\|\hat\theta_n-\theta_\star\|=O_p(1).
	$$	
	
\end{lemma}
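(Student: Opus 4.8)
The plan is to run the classical two-stage argument for the rate of convergence of an (approximate) $M$-estimator: first establish consistency $\hat\theta_n\to_p\theta_\star$, and then upgrade this to the $\sqrt n$-rate by converting the bound on the norm discrepancy into a bound on the parameter discrepancy via the local linear lower bound supplied by norm-differentiability. The engine throughout is the comparison inequality that follows from $\hat\theta_n$ being a near-minimizer of $\theta\mapsto\|\mu_\theta-\hat\mu_n\|_\mathcal{P}$.

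First I would control the stochastic fluctuation of the empirical measure. By Assumption \ref{ass:seven} the uniform CLT gives $\sqrt n(\hat\mu_n-\mu_{\theta_\star})\Rightarrow\mathbb{G}_{\theta_\star}$, and since the limit is a tight, hence almost surely norm-finite, Gaussian process and $\|\cdot\|_\mathcal{P}$ is continuous, the continuous mapping theorem yields $\|\hat\mu_n-\mu_\star\|_\mathcal{P}=O_p(1/\sqrt n)$; here Assumption \ref{ass:5a} identifies $\mu_{\theta_\star}=\mu_\star$, so $\epsilon_\star=0$. (Assumption \ref{ass:1} with $v_n=\sqrt n$ delivers the same conclusion.) The near-minimizer property then gives $\|\mu_{\hat\theta_n}-\hat\mu_n\|_\mathcal{P}\le\|\mu_{\theta_\star}-\hat\mu_n\|_\mathcal{P}+o_p(1/\sqrt n)=\|\mu_\star-\hat\mu_n\|_\mathcal{P}+o_p(1/\sqrt n)$, so that by the triangle inequality
$$\|\mu_{\hat\theta_n}-\mu_\star\|_\mathcal{P}\le 2\|\hat\mu_n-\mu_\star\|_\mathcal{P}+o_p(1/\sqrt n)=O_p(1/\sqrt n).$$
In particular $\|\mu_{\hat\theta_n}-\mu_\star\|_\mathcal{P}=o_p(1)$, which together with the well-separation condition in Assumption \ref{ass:5}(ii) forces $\hat\theta_n\to_p\theta_\star$: on the event $\{\|\hat\theta_n-\theta_\star\|>\delta\}$ one would have $\|\mu_{\hat\theta_n}-\mu_\star\|_\mathcal{P}\ge\inf_{\|\theta-\theta_\star\|>\delta}\|\mu_\theta-\mu_\star\|_\mathcal{P}>0$, contradicting $\|\mu_{\hat\theta_n}-\mu_\star\|_\mathcal{P}=o_p(1)$ unless that event has vanishing probability.

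Finally, having localized $\hat\theta_n$ into the neighbourhood where Assumption \ref{ass:6} applies, I would convert the $O_p(1/\sqrt n)$ bound on the norm discrepancy into the same bound on the parameter. Assumption \ref{ass:6} makes $\theta\mapsto\mu_\theta$ norm-differentiable at $\theta_\star$ with derivative $\dot\xi$, and combined with the two-sided bound $C_1\|t\|\le\|\langle t,\dot\xi\rangle\|_\mathcal{P}\le C_2\|t\|$ this yields, for $\theta$ in a neighbourhood of $\theta_\star$, the local equivalence $\|\mu_\theta-\mu_\star\|_\mathcal{P}=\|\langle\theta-\theta_\star,\dot\xi\rangle\|_\mathcal{P}+o(\|\theta-\theta_\star\|)$ with leading term bounded below by $C_1\|\theta-\theta_\star\|$. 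By consistency the $o(\cdot)$ remainder is eventually dominated, so $\|\mu_{\hat\theta_n}-\mu_\star\|_\mathcal{P}\ge\tfrac{C_1}{2}\|\hat\theta_n-\theta_\star\|$ for $n$ large; chaining this with the upper bound from the previous step gives $\tfrac{C_1}{2}\|\hat\theta_n-\theta_\star\|\le O_p(1/\sqrt n)$, i.e.\ $\sqrt n\|\hat\theta_n-\theta_\star\|=O_p(1)$.

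The main obstacle is the ordering and interface between the global and local arguments. Assumption \ref{ass:6} only furnishes the crucial linear lower bound in a neighbourhood of $\theta_\star$, so the quantitative rate step cannot be invoked until consistency is in hand, and consistency itself must be obtained from the global identification in Assumption \ref{ass:5}(ii) rather than from the local expansion, on pain of circularity. One must also verify that the $o(\|\hat\theta_n-\theta_\star\|)$ remainder in the norm expansion is genuinely negligible against the $C_1\|\hat\theta_n-\theta_\star\|$ leading term; this is precisely what consistency buys, since it drives $\|\hat\theta_n-\theta_\star\|$ small enough that the remainder becomes a strict fraction of the leading term.
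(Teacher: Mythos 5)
Your proof is correct and follows essentially the same route as the paper's: a two-stage argument in which consistency is first obtained from the near-minimizer property, the triangle inequality, and the well-separation condition in Assumption \ref{ass:5}(ii), and the $\sqrt{n}$-rate is then extracted by combining the norm-differentiability in Assumption \ref{ass:6}(i) with the lower bound $C_1\|t\|\leq\|\langle t,\dot\xi\rangle\|_{\mathcal{P}}$ of Assumption \ref{ass:6}(iii) against the $O_p(n^{-1/2})$ bound on $\|\hat\mu_n-\mu_\star\|_{\mathcal{P}}$. The only cosmetic difference is that you derive the $O_p(n^{-1/2})$ control from the uniform CLT (Assumption \ref{ass:seven}) while the paper invokes Assumption \ref{ass:1} with $v_n=\sqrt{n}$, a substitution you yourself note is interchangeable.
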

\begin{proof}[Proof of Lemma \ref{lemma:point_est}]
The result follows from standard arguments. From Assumption \ref{ass:1}, 
$$
\sup_{\theta\in\Theta}\left|\|\mu_\theta-\mu_n\|_{\mathcal{P}}-\|\mu_\theta-\mu_\star\|\right|=o_p(1).
$$From Assumption \ref{ass:5}(ii), 
$$
\|\mu_{\theta_\star}-\mu_\star\|_{\mathcal{P}}<\inf_{\theta\in\Theta}\|\mu_\theta-\mu_\star\|_{\mathcal{P}}
$$Consistency then follows from arguments miring those in Theorem 5.7 of \cite{van2000asymptotic}. 

All that remains is to establish $\sqrt{n}$-consistency. From Assumption \ref{ass:6}, 
\begin{flalign*}
o(\|\hat\theta_n-\theta_\star\|)&=
\|\mu_{\hat{\theta}_n}-\mu_\star-\langle (\hat\theta_n-\theta_\star),\dot\xi\rangle\|\geq \|\langle (\hat\theta_n-\theta_\star),\dot\xi\rangle\|-\|\hat\mu_n-\mu_{\hat\theta_n}\|-\|\hat\mu_n-\mu_\star\|
\end{flalign*}It then follows that 
\begin{flalign*}
\|\langle (\hat\theta_n-\theta_\star),\dot\xi\rangle\|[1-o_p(1)]\leq  \|\hat\mu_n-\mu_{\hat\theta_n}\|+\|\hat\mu_n-\mu_\star\|\leq o_p(n^{-1/2})+O_p(n^{-1/2})
\end{flalign*}
where the first inequality comes from re-arranging terms in the previous equation, and the second comes from applying $\|\mu_{\hat\theta_n}-\hat\mu_n\|_{\mathcal{P}}\leq \inf_{\theta\in\Theta}\|\mu_\theta-\hat\mu_n\|_{\mathcal{P}}+o_{p}(1/\sqrt{n})$, and Assumption \ref{ass:1}. Applying Assumption \ref{ass:6} (iii) then yields 
$$
\|\hat\theta_n-\theta_\star\|[1-o_p(1)]\leq o_p(n^{-1/2})+O_p(n^{-1/2}).
$$
\end{proof}
\end{appendix}

\end{document}